\newtheorem{thm}{Theorem}
\newtheorem{cor}{Corollary}
\newtheorem{lem}{Lemma}
\newtheorem{prop}{Proposition}
\newtheorem{defn}{Definition}
\newtheorem{rem}{Remark}
\begin{document}
%
\title{Amplifying Inter-message Distance: On Information Divergence Measures in Big Data}
%
%
%
\author{\IEEEauthorblockN{Rui She,
Shanyun Liu, and
Pingyi Fan, \IEEEmembership{Senior Member,~IEEE}
\\}

\thanks{
This work was supported by the China Major State Basic Research Development Program (973 Program)
No.2012CB316100(2), National Natural Science Foundation of China (NSFC) No. 61771283 and China Scholarship Council.

%
R. She, S. Liu, and P. Fan are with the Department of Electronic Engineering, Tsinghua University, Beijing, GA 30332 China (e-mail: sher15@mails.tsinghua.edu.cn; liushany16@mails.tsinghua.edu.cn; fpy@tsinghua.edu.cn).
%
%
}
}
\maketitle

\begin{abstract}
Message identification (M-I) divergence is an important measure of the information distance between probability distributions, similar to Kullback-Leibler (K-L) and Renyi divergence. In fact, M-I divergence with a variable parameter can make an effect on characterization of distinction between two distributions.
Furthermore, by choosing an appropriate parameter of M-I divergence, it is possible to amplify the information distance between adjacent distributions while maintaining enough gap between two nonadjacent ones. Therefore, M-I divergence can play a vital role in distinguishing distributions more clearly.
In this paper, we first define a parametric M-I divergence in the view of information theory and then present its major properties. In addition, we design a M-I divergence estimation algorithm by means of the ensemble estimator of the proposed weight kernel estimators, which can improve the convergence of mean squared error from ${O(\varGamma^{-j/d})}$ to ${O(\varGamma^{-1})}$ $({j\in (0,d]})$.
We also discuss the decision with M-I divergence for clustering or classification, and investigate its performance in a statistical sequence model of big data for the outlier detection problem.
\end{abstract}

\begin{IEEEkeywords}
Message Identification (M-I) Divergence, Discrete Distribution Estimation, Divergence Estimation, Big Data Analysis, Outlier Detection
\end{IEEEkeywords}

%
\IEEEpeerreviewmaketitle

\section{Introduction}
%
%
%
%
\IEEEPARstart{I}{n} the big data era, the amount of data from many kinds of areas is exploding greatly, and how to analyze the collected big data attracts more and more attention. For big data analysis,
there are a series of relevant technologies including machine learning, pattern recognition, statistics, estimation theory and so on.
As an essential element in machine learning, the information divergence can be used to deal with distribution problems by mapping the relationship between two probability distributions to nonnegative values. Currently, information divergences have been extended for nonnegative tensors and used to minimize the approximation error between the observed data and its model \cite{Learning-the-information-divergence}. Additionally, typical applications of information divergences also include faulty detection \cite{Analytical-model-of-the-KL-Divergence}, key frame selection \cite{Key-frame-selection-based-on-KL-divergence},
image and speech recognition \cite{Human-Ear-recognition-based-on-Multi-scale}, \cite{A-Study-on-Invariance-of- $f$-Divergence-and-Its-Application} and so on.

In the framework of the combination of information theory and big data analysis, information divergences were investigated as measures to handle the learning problem about distributions. In particular, the relative entropy as a special case of K-L divergence is a superior tool of measuring information distance in some applications such as anomaly detection \cite{Kullback-Leibler-Divergence-(KLD)-Based-Anomaly-Detection}, FMRI data processing \cite{Exploring-functional-connectivities-of-the-human-brain-using-multivariate-information- analysis}, clustering and classification \cite{Feature-selection-algorithm-for-hierarchical-text-classification}. 
Moreover, Shannon entropy can be also regarded as a special case of K-L divergence with an uniform distribution. It is appropriate for entropy to be applied to intrinsic dimension estimation \cite{On-local-intrinsic-dimension-estimation}, texture classification and image registration \cite{Applications-of-entropic}. In addition, information divergences can be applicable to extending methods of machine learning to distributional features
\cite{Divergence-estimation-for-multidimensional-densities}.

Although there are a great deal of available information divergences, little research is investigated on how to select a better one for a certain application to big data. Due to the different properties of different divergences, this issue is a significant work for information divergences used in big data analysis. Besides, another factor which can contribute to the issue is that a divergence-based estimation may depend on the selected divergence in a given task. Then we can see that it is flexible for information divergences to cope with different data learning tasks. For example, Euclidean distance has superior performance on handling data with Gaussian noise; K-L divergence is suitable for topic collection of text documents \cite{atent-Dirichlet-allocation}; Itakura-Saito divergence can perform well on audio signal processing \cite{Nonnegative-matrix-factorization}; as well as the MIM and non-parametric MIM which are similar to entropy, can be proven suitable for minority subset detection \cite{Message-Importance-Measure-and-Its-Application, Focusing-on-a-Probability-Element, Non-parametric-Message-Important-Measure}.
In addition, the information distance between given distributions can be also as a factor to make an effect on the divergence selection. Some divergences may not distinguish certain similar distributions due to the confusion between information distance and the statistical error.

In this paper, to study the information divergence as a measure in big data application, we will focus on the information distance measured by different divergences. As well, it is necessary to investigate the more efficient divergence estimation for practical applications or models. Before our work, let us review some typical information divergences first.

\subsection{Information Divergence measures}
There exist many different kinds of information divergences, which can play a vital role in the fields of information theory, statistics and big data analysis. To simply summarize a variety of information divergences, we focus on the most commonly used ones including K-L divergence, Renyi divergence and $\alpha$-, $\beta$- or $\gamma$-divergences
\cite{A-measure-of-asymptotic-efficiency,Differential-Geometrical-Methods,
Robust-paramater-estimation}, which belong to broader ones such as the f-divergences or Bregman divergences \cite{Divergence-measures-based-on-the-Shannon-entropy}.

For two finite discrete distributions $P=(p_1,p_2,..., p_n)$ and $Q=(q_1,q_2,..., q_n)$, the definitions of the popularly used divergences and some of their special cases or relationships are given below.

a). K-L divergence is defined as
\begin{equation}
  D(P\parallel Q)=\sum_i p_i\log(\frac{p_i}{q_i}).
\end{equation}

b). Renyi divergence of order $\alpha$ is defined as
\begin{equation}
  D_\alpha(P\parallel Q)=\frac{1}{\alpha-1}\log\sum_i p_i^{\alpha}q_i^{1-\alpha},
\end{equation}
where $0<\alpha< \infty$, and $\alpha \ne 1$. Notes that in the case of $\alpha\to 1$, Renyi divergence converges to K-L divergence.

c). $\alpha$-divergence is defined as
\begin{equation}
 \begin{aligned}
  D^{(\alpha)}(P\parallel Q)
  =\frac{\sum_i p_i^{\alpha} q_i^{1-\alpha}-\alpha p_i +(\alpha-1)q_i }{\alpha(\alpha-1)},
 \end{aligned}
\end{equation}
where $\alpha \to 1$, $\alpha=2$, $\alpha=-1$ and $\alpha=1/2$ denote K-L, Pearson Chi-square, inverse Pearson and double Hellinger distances, respectively.

d). $\beta$-divergence is defined as
\begin{equation}
 \begin{aligned}
  D^{(\beta)}(P\parallel Q)
  =\frac{\sum_i p_i^{\beta+1}+\beta q_i^{(\beta+1)} -(\beta+1)p_iq_i^{\beta} }{\beta(\beta+1)},
 \end{aligned}
\end{equation}
where $\beta=1$ and $\beta\to -1$ denote the Euclidean distance and Itakura-Saito divergence, respectively.

e). $\gamma$-divergence is defined as
\begin{equation}
 \begin{aligned}
   & D^{(\gamma)}(P\parallel Q)\\
   &= \frac{1}{\gamma(\gamma+1)} \Big[
   \log (\sum_i p_i^{\gamma+1}) + \gamma\log (\sum_i q_i^{\gamma+1})
   - (\gamma+1)\log (\sum_i p_iq_i^{\gamma}) \Big],
 \end{aligned}
\end{equation}
where K-L divergence becomes its special case if $\gamma \to 0$.

However, there may be also situations where the commonly used divergences can not work well. To this end, we introduce a new divergence different from the above divergences as follows.

\subsection{Message Identification Divergence}
In this subsection, we shall introduce a new parametric information identification measure, which is referred to as the message identification divergence (M-I divergence).
\begin{defn}\label{defn:M-I}
For two given probability distributions with a same finite alphabet, $P=(p_1, p_2, ... , p_n)$ and $Q=(q_1, q_2, ... , q_n)$, the M-I divergence with parameter $\varpi$ is defined as
\begin{equation}\label{M-I}
   D_{MI}(P\parallel Q,\varpi)=
   D_{\varpi}(P\parallel Q)=\log \sum_{i=1}^n{p_{i}e^{\left( \varpi \frac{p_i}{q_i} \right) } }-\varpi,
\end{equation}
where $\varpi > 0$ is an adjustable identification parameter.
\end{defn}

Note that the larger parameter $\varpi$ is, the larger contribution the information distance elements $p_i/q_i$ have to M-I divergence. In the application, it is necessary to set an appropriate $\varpi$ which is not too large to compute easily.

\subsection{Organization}
The rest of this paper is organized as follows. In Section II, we discuss some major properties of M-I divergence, such as its monotonicity, convexity and inequality. In Section III, we propose a multidimensional kernel estimator with the weight window, which can be adapted to estimate a discrete distribution. As well, we discuss its performance in the mean squared error (MSE) criterion. Then an ensemble estimator for M-I divergence is also proposed by use of some weighted-window kernel estimators. Section IV discuss how to use M-I divergence in big data analysis and apply it to a proposed outlier detection model. Besides, some simulations are also presented to check our theoretical results. Finally, we conclude the paper in Section V.

\section{The Properties of M-I Divergence}
In this section, some dominant properties of M-I divergence is investigated in details.\par
\subsection{The Non-negative Property}
\begin{prop}\label{prop:The Non-negative Property}
The M-I divergence $D_{\varpi}(P\parallel Q)$ with $\varpi > 0$ is non-negative for any probability $P$ and $Q$, namely
\begin{equation}
  D_{\varpi}(P\parallel Q)\ge 0.
\end{equation}
\end{prop}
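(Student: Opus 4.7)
The plan is to reduce the claim $D_\varpi(P\|Q)\ge 0$ to the equivalent statement $\sum_i p_i e^{\varpi p_i/q_i}\ge e^{\varpi}$ by exponentiating both sides of $\log\sum_i p_i e^{\varpi p_i/q_i}\ge \varpi$. Once written this way, the left side is nothing but the expectation $\mathbb{E}_P[e^{\varpi X}]$ of the convex function $x\mapsto e^{\varpi x}$ (for $\varpi>0$) evaluated on the random variable $X=p_i/q_i$ distributed according to $P$.

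First I would invoke Jensen's inequality on this convex function to obtain
\begin{equation*}
   \sum_{i=1}^n p_i e^{\varpi p_i/q_i}\;\ge\;\exp\!\left(\varpi\sum_{i=1}^n p_i\cdot\frac{p_i}{q_i}\right)\;=\;\exp\!\left(\varpi\sum_{i=1}^n\frac{p_i^{2}}{q_i}\right).
\end{equation*}
The target inequality $\sum_i p_i e^{\varpi p_i/q_i}\ge e^{\varpi}$ then follows from $\sum_i p_i^{2}/q_i\ge 1$ together with the monotonicity of the exponential and the assumption $\varpi>0$. So the second step is to establish $\sum_i p_i^{2}/q_i\ge 1$. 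This is a one-line Cauchy--Schwarz calculation: writing $p_i=(p_i/\sqrt{q_i})\cdot\sqrt{q_i}$,
\begin{equation*}
   1 \;=\; \Bigl(\sum_i p_i\Bigr)^{2} \;\le\; \Bigl(\sum_i \tfrac{p_i^{2}}{q_i}\Bigr)\Bigl(\sum_i q_i\Bigr) \;=\; \sum_i \tfrac{p_i^{2}}{q_i},
\end{equation*}
using $\sum_i q_i=1$. (Equivalently, this quantity equals $1+\chi^{2}(P\|Q)$, so the inequality is just non-negativity of the Pearson $\chi^{2}$ divergence.)

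Finally, I would chain the two steps and take logarithms: $\log\sum_i p_i e^{\varpi p_i/q_i}\ge \varpi\sum_i p_i^{2}/q_i\ge \varpi$, and subtracting $\varpi$ gives $D_\varpi(P\|Q)\ge 0$, with equality precisely when both Jensen and Cauchy--Schwarz are tight, i.e. when $p_i/q_i$ is constant in $i$, which (since both are probability vectors) forces $P=Q$.

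There is no real obstacle here; the only judgment call is the choice of convexity tool. A Jensen-plus-Cauchy-Schwarz route is cleanest, but the same conclusion can be reached by applying the log-sum inequality directly, or by noting that $f(\varpi):=\log\sum_i p_i e^{\varpi p_i/q_i}-\varpi$ satisfies $f(0)=0$ with $f'(0)=\sum_i p_i^{2}/q_i-1\ge 0$ and then showing $f$ is non-decreasing on $[0,\infty)$; I would mention this only if the monotonicity result promised in the next subsection needs to be used in tandem. The Cauchy--Schwarz path is the most self-contained and least apt to invoke results not yet proven.
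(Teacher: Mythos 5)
Your proposal is correct, and it takes a slightly different route from the paper's. The paper gets the result in a single application of Jensen's inequality: it takes the convex function $f(x)=e^{\varpi x^{-1}}$ evaluated at the points $q_i/p_i$, whose $P$-average is $\sum_i p_i (q_i/p_i)=\sum_i q_i=1$, so the lower bound $e^{\varpi}$ falls out immediately with no auxiliary inequality. You instead apply Jensen to $x\mapsto e^{\varpi x}$ at the points $p_i/q_i$, whose $P$-average is $\sum_i p_i^2/q_i = 1+\chi^2(P\parallel Q)$, and therefore need the extra Cauchy--Schwarz step to see that this average is at least $1$. Both arguments are sound, and your equality analysis (both Jensen and Cauchy--Schwarz tight iff $p_i/q_i$ is constant, hence $P=Q$) matches the paper's equality condition. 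The paper's choice of convex function is the more economical one; what your longer route buys is the strictly stronger intermediate bound
\begin{equation*}
  D_{\varpi}(P\parallel Q)\;\ge\;\varpi\left(\sum_i \frac{p_i^2}{q_i}-1\right)\;=\;\varpi\,\chi^{2}(P\parallel Q)\;\ge\;0,
\end{equation*}
i.e.\ a quantitative lower bound on the M-I divergence in terms of the Pearson $\chi^2$ divergence, which the paper's one-step argument does not produce. Your closing remark about deriving the result from $f(0)=0$ plus monotonicity in $\varpi$ is also consistent with the paper, which proves that monotonicity separately as Proposition 2; you are right that the self-contained route is preferable here since that result comes later.
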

\begin{proof}[{~~~Proof:}]
Define $f(x)=\exp{(\varpi x^{-1})}$ with $\varpi >0 $. It is readily seen that the second order derivative of $f(x)$ with respect to $x$ is positive, namely, $\frac{ \partial^{2} f(x) }{ \partial x^{2} }= \left( \varpi^2x^{-4}+2\varpi x^{-3} \right)e^{\varpi x^{-1}} >0$.
Then, we know that $f(x)$ is a convex function for $x \in \textbf{R}$.
According to Jensen's inequality and the concavity of function $\log (x)$, we have
\begin{equation}
  \begin{aligned}
    D_{\varpi}(P\parallel Q)
    & =  \log \sum_{i=1}^n{p_{i}e^{ \varpi \frac{p_i}{q_i} } }-\varpi
    \ge \log e^{ \varpi (\sum_{i=1}^n p_i ( \frac{q_i}{p_i} ) )^{-1} }-\varpi
    =  \log e^{\varpi}-\varpi = 0.
  \end{aligned}
\end{equation}
In particular, the equality holds if and only if $p_i=q_i$ $(i\in \{ 1, 2, ... , n \})$.
\end{proof}
\subsection{Monotonicity}
\begin{prop}\label{prop:The Increasing Property}
For the identification parameter $\varpi \in (0,+\infty)$, the M-I divergence $D_{\varpi}(P\parallel Q)$ is nondecreasing in $\varpi$.
\end{prop}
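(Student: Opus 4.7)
The plan is to show $\frac{d}{d\varpi} D_{\varpi}(P\parallel Q) \geq 0$ for every $\varpi > 0$. Setting $G(\varpi) := \log \sum_{i=1}^n p_i\, e^{\varpi p_i/q_i}$, we have $D_{\varpi}(P\parallel Q) = G(\varpi) - \varpi$, so differentiating term-by-term yields
\[
G'(\varpi) = \frac{\sum_i (p_i^2/q_i)\, e^{\varpi p_i/q_i}}{\sum_i p_i\, e^{\varpi p_i/q_i}},
\]
which is precisely the expectation of the random variable $X$ taking values $p_i/q_i$ under the exponentially tilted law $w_i(\varpi) \propto p_i\, e^{\varpi p_i/q_i}$. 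Proving the proposition therefore reduces to showing $G'(\varpi) \geq 1$ for every $\varpi \geq 0$.

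The key observation is that $G$ is, up to the fixed reference measure $\{p_i\}$, a cumulant generating function and hence convex in $\varpi$. A short direct computation confirms $G''(\varpi) = \mathrm{Var}_\varpi(X) \geq 0$, so $G'$ is monotone non-decreasing in $\varpi$, and it suffices to bound $G'$ at the endpoint $\varpi = 0$.

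The final ingredient is a Cauchy--Schwarz inequality anchoring the chain at $\varpi = 0$: using $\sum_i p_i = \sum_i q_i = 1$,
\[
1 = \left(\sum_i \sqrt{q_i}\cdot \frac{p_i}{\sqrt{q_i}}\right)^2 \leq \left(\sum_i q_i\right)\left(\sum_i \frac{p_i^2}{q_i}\right) = G'(0),
\]
so $G'(\varpi) \geq G'(0) \geq 1$ throughout $[0,\infty)$, which gives $\frac{d}{d\varpi} D_{\varpi}(P\parallel Q) \geq 0$. The main obstacle — minor, but the only non-routine step — is this Cauchy--Schwarz lower bound on $G'(0)$; once it is in place, convexity of the log-MGF $G$ immediately propagates the bound to all $\varpi > 0$, and the monotonicity follows by integration.
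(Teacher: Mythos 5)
Your proof is correct, and it takes a genuinely different route from the paper's. Both arguments begin by computing the same derivative $\frac{\partial}{\partial\varpi}D_\varpi(P\parallel Q)=G'(\varpi)-1$ and both ultimately rest on the chi-square--type inequality $\sum_i p_i^2/q_i\ge 1$, but the passage from the derivative to that inequality differs. The paper bounds the numerator of $G'(\varpi)-1$ directly at each fixed $\varpi$: it splits the index set according to the sign of $p_i-q_i$ and observes that in either case $\left(\tfrac{p_i}{q_i}-1\right)p_i e^{\varpi p_i/q_i}\ge \left(\tfrac{p_i}{q_i}-1\right)p_i e^{\varpi}$, so the numerator is at least $e^{\varpi}\left(\sum_i p_i^2/q_i-1\right)\ge 0$, the last step obtained via Jensen's inequality applied to $t\mapsto 1/t$ (where you use Cauchy--Schwarz; the two are equivalent here). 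You instead exploit the structure of $G$ as a log-moment-generating function: $G''(\varpi)=\mathrm{Var}_\varpi(X)\ge 0$, so $G'$ is nondecreasing and it suffices to verify $G'(0)=\sum_i p_i^2/q_i\ge 1$ at the single endpoint. Your version is more conceptual and yields, as a by-product, the stronger fact that $D_\varpi$ is convex in $\varpi$ (its derivative is itself nondecreasing), which the paper's pointwise bound does not give; the paper's version avoids second derivatives and stays entirely elementary. Both are complete under the standing assumption $q_i>0$ for all $i$.
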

\begin{proof}[{~~~Proof:}]
By using the definition of M-I divergence and dividing its support set of $i$ into two parts, it is readily seen that the partial derivative of $D_{\varpi}(P|| Q)$ with respect to $\varpi$ satisfies
\begin{equation}
  \begin{aligned}
  \frac{ \partial D_{\varpi}(P\parallel Q)}{ \partial \varpi}
  & = \frac{ \sum\limits_{i \in \{ i: p_i \ge q_i\} \bigcup \{ i: p_i < q_i\} } {\left(  \frac{p_i}{q_i} -1\right)p_i e^{\left( \varpi \frac{p_i}{q_i} \right)} } }
   { \sum_i { p_i e^{\left( \varpi \frac{p_i}{q_i} \right)} } }\\
  & \ge \frac{ \sum\limits_{i \in \{ i: p_i \ge q_i\}} {\left(  \frac{p_i}{q_i} -1\right)p_i e^{\varpi } } }
   { \sum_i { p_i e^{\left( \varpi \frac{p_i}{q_i} \right)} } }
  + \frac{ \sum\limits_{i \in \{ i: p_i < q_i\}} {\left(  \frac{p_i}{q_i} -1\right)p_i e^{\varpi } } }
   { \sum_i { p_i e^{\left( \varpi \frac{p_i}{q_i} \right)} } }
    =
   \frac{ e^{ \varpi } \sum_i {\left(  \frac{p_i}{q_i} -1\right)p_i } }
   { \sum_i { p_i e^{\left( \varpi \frac{p_i}{q_i} \right)} } }.
  \end{aligned}
\end{equation}

According to Jensen's inequality, it is readily seen that $e^{ \varpi } \sum_i {(  \frac{p_i}{q_i} -1)p_i } \ge e^{\varpi}[ ( \sum_i p_i \frac{q_i}{p_i} )^{-1} -1].$
Thus, it can be readily verified that
\begin{equation}
 \begin{aligned}
   \frac{\partial D_{\varpi}(P\parallel Q)}{\partial \varpi}
   \ge \frac{ e^{\varpi}\left[ \left( \sum_i p_i \frac{q_i}{p_i} \right)^{-1} -1\right] }
   {\sum_i { p_i e^{\left( \varpi \frac{p_i}{q_i} \right)} } }
   = 0,
 \end{aligned}
\end{equation}
which means $D_{\varpi}(P\parallel Q)$ is monotonically nondecreasing in $\varpi$ and the property is proved.
\end{proof}
\begin{rem}
If and only if $p_i=q_i$ $(i\in \{ 1, 2, ... , n \})$, M-I divergence $D_{\varpi}(P \parallel Q)$ remains zero with increasing $\varpi$. In other cases $(p_i \ne q_i)$, $D_{\varpi}(P \parallel Q)$ is increasing in $\varpi$. According to this property, it can be apparently deduced that $\varpi$ is an adjustable parameter for M-I divergence to amplify the distance between different probability distributions.
\end{rem}
\subsection{The Convexity Property}
\begin{prop}\label{prop:The Convexity Property}
For any $\varpi>0$, M-I divergence $D_{\varpi}(P \parallel Q)$ is jointly convex in the case of exponential function. That is, for two given pairs of probability distributions $(P_0, Q_0)$ and $(P_1, Q_1)$ without zero elements, and any $0< \lambda <1$, we have
\begin{equation}
\left(1-\lambda \right)e^{D_{\varpi}(P_0 \parallel Q_0)}+\lambda e^{D_{\varpi}(P_1 \parallel Q_1)}
\ge e^{D_{\varpi}(P_{\lambda} \parallel Q_{\lambda})},
\end{equation}
where $P_{\lambda}=(1-\lambda)P_0+\lambda P_1$ and $Q_{\lambda}=(1-\lambda)Q_0+\lambda Q_1$.
\end{prop}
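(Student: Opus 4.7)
The plan is to reduce the exponentiated inequality to a per-coordinate statement about the function $h(p,q) := p\,e^{\varpi p/q}$ on $(0,\infty)^2$, and then recognize $h$ as a perspective function of a scalar convex function, which automatically gives joint convexity.

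First, I would rewrite the exponentiated divergence in a form that makes the inequality additive over coordinates. From Definition~\ref{defn:M-I},
\begin{equation*}
e^{D_{\varpi}(P\parallel Q)} \;=\; e^{-\varpi}\sum_{i=1}^{n} p_i \, e^{\varpi\, p_i/q_i} \;=\; e^{-\varpi}\sum_{i=1}^{n} h(p_i,q_i).
\end{equation*}
Canceling the common factor $e^{-\varpi}$, the target inequality becomes
\begin{equation*}
(1-\lambda)\sum_i h(p_{0,i},q_{0,i}) + \lambda \sum_i h(p_{1,i},q_{1,i}) \;\ge\; \sum_i h\bigl((1-\lambda)p_{0,i}+\lambda p_{1,i},\,(1-\lambda)q_{0,i}+\lambda q_{1,i}\bigr),
\end{equation*}
so it suffices to prove that $h(p,q)=p\,e^{\varpi p/q}$ is jointly convex on $(0,\infty)^2$ for each fixed $\varpi>0$; summing the coordinate-wise inequalities then yields the claim.

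Next, I would observe the perspective structure: setting $\phi(t):=t\,e^{\varpi t}$ for $t>0$, we have
\begin{equation*}
q\,\phi\!\left(\tfrac{p}{q}\right) \;=\; q\cdot \tfrac{p}{q}\,e^{\varpi p/q} \;=\; p\,e^{\varpi p/q} \;=\; h(p,q),
\end{equation*}
so $h$ is precisely the perspective of $\phi$ with respect to its second argument. It is a standard fact in convex analysis that if $\phi$ is convex on an interval, then its perspective $(p,q)\mapsto q\,\phi(p/q)$ is jointly convex on the corresponding half-plane $\{q>0\}$. Hence I only need to verify that $\phi$ itself is convex on $(0,\infty)$, which follows from
\begin{equation*}
\phi''(t) \;=\; \varpi\, e^{\varpi t}(2+\varpi t) \;>\; 0 \qquad (t>0,\; \varpi>0).
\end{equation*}
Joint convexity of $h$ is then immediate, and the proof is complete after the sum-over-$i$ step above.

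The routine part is the algebraic rewriting and the scalar second-derivative check. The one step that requires a small insight is spotting the perspective factorization $h(p,q)=q\,\phi(p/q)$; without it one is pushed into computing a $2\times 2$ Hessian and showing its determinant vanishes while the diagonal is nonnegative (which does work, but is less transparent). The assumption that $P_0,P_1,Q_0,Q_1$ have no zero entries is used exactly to keep every $(p_{0,i},q_{0,i}),(p_{1,i},q_{1,i})$ and their convex combinations strictly inside the domain $(0,\infty)^2$ where the perspective construction is valid, so no boundary issues need to be handled separately.
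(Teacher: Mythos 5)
Your proof is correct and is essentially the same argument as the paper's: the paper also reduces to the coordinate-wise inequality for $h(p,q)=p\,e^{\varpi p/q}$ and proves it by applying Jensen's inequality to $f(x)=x e^{\varpi x}$ with the weights $(1-\lambda)q_{0,i}/q_{\lambda,i}$ and $\lambda q_{1,i}/q_{\lambda,i}$, which is precisely the standard proof that the perspective of a convex function is jointly convex. You simply name that structure and cite the perspective lemma where the paper unpacks it explicitly.
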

\begin{proof}[{~~~Proof:}]
Define $f(x)=xe^{\varpi x}$ with $\varpi>0$ and $x \in \textbf{R}$. It is easy to see that the first order and the second order derivative of $f(x)$ are both positive for $\varpi>0$ and $x>0$. Then, it is evident that $f(x)$ is convex for $\varpi>0$ and $x>0$. By using Jensen's inequality, in the case of $\varpi>0$, we have
\begin{equation}\label{equ:convex-1}
  \begin{aligned}
    \frac{ (1-\lambda)q_{0,i} }{q_{\lambda,i}}\left( \frac{p_{0,i}}{q_{0,i}} \right)e^{\varpi \left( \frac{p_{0,i}}{q_{0,i}} \right)}
    & +  \frac{ \lambda q_{1,i} }{q_{\lambda,i}}\left( \frac{p_{1,i}}{q_{1,i}} \right) e^{\varpi \left( \frac{p_{1,i}}{q_{1,i}} \right)}\\
    & \qquad \qquad \qquad \qquad \ge \left( \frac{p_{\lambda,i}}{q_{\lambda,i}} \right)e^{\varpi \left( \frac{p_{\lambda,i}}{q_{\lambda,i}} \right)},
  \end{aligned}
\end{equation}
where $p_{\lambda,i}=(1-\lambda)p_{0,i}+\lambda p_{1,i}$ and $q_{\lambda,i}=(1-\lambda)q_{0,i}+\lambda q_{1,i}$, as well as, $p_{m,i}$ and $q_{m,i}$ $(m=0, 1, \lambda ; ~ i=0, 1, ..., n)$ are any elements in $P_m$ and $Q_m$ $(m=0, 1, \lambda)$, respectively.

Then, for all elements of probability distributions $P_m$ and $Q_m$ $(m=0, 1, \lambda)$,  we have
\begin{equation}\label{equ:convex-2}
  \begin{aligned}
    (1-\lambda)\sum_i p_{0,i} e^{\varpi \left( \frac{p_{0,i}}{q_{0,i}} \right)}
    + & \lambda \sum_i p_{1,i} e^{\varpi \left( \frac{p_{1,i}}{q_{1,i}} \right)}
     \ge \sum_i p_{\lambda,i}e^{\varpi \left( \frac{p_{\lambda,i}}{q_{\lambda,i}} \right)},
  \end{aligned}
\end{equation}
for any $\varpi>0$, which proves the property.
\end{proof}
\begin{cor}
For any two pairs of probability distributions $(P_0, Q_0)$ and $(P_1, Q_1)$ without zero elements, and any $\lambda \in (0, 1)$, we have
\begin{equation}\label{equ:convex-3}
  \begin{aligned}
    \max\{ D_\varpi(P_0 \parallel Q_0),  D_\varpi(P_1 \parallel Q_1) \}
    \ge D_\varpi(P_\lambda \parallel Q_\lambda),
  \end{aligned}
\end{equation}
where $P_{\lambda}=(1-\lambda)P_0+\lambda P_1$ and $Q_{\lambda}=(1-\lambda)Q_0+\lambda Q_1$.
\end{cor}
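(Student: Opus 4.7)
The plan is to reduce the corollary directly to the joint convexity statement of Proposition \ref{prop:The Convexity Property}, since the two inequalities differ only by the monotone transformation $x \mapsto e^x$ and a trivial bound on a convex combination. Because Proposition \ref{prop:The Convexity Property} is already available, no new analytic work on the function $f(x)=x e^{\varpi x}$ or on Jensen's inequality should be needed.

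First, I would write down the conclusion of Proposition \ref{prop:The Convexity Property} as the starting inequality
\begin{equation*}
  (1-\lambda)\,e^{D_{\varpi}(P_0 \parallel Q_0)} + \lambda\,e^{D_{\varpi}(P_1 \parallel Q_1)}
  \;\ge\; e^{D_{\varpi}(P_{\lambda} \parallel Q_{\lambda})}.
\end{equation*}
Next, I would upper-bound the left-hand side by replacing each of the two terms with the maximum of $D_\varpi(P_0\parallel Q_0)$ and $D_\varpi(P_1\parallel Q_1)$; since $e^{x}$ is strictly increasing, this yields
\begin{equation*}
  e^{\max\{ D_\varpi(P_0\parallel Q_0),\, D_\varpi(P_1\parallel Q_1) \}}
  \;=\; \bigl[(1-\lambda)+\lambda\bigr]\,e^{\max\{\cdot,\cdot\}}
  \;\ge\; e^{D_{\varpi}(P_{\lambda} \parallel Q_{\lambda})}.
\end{equation*}
Finally, applying $\log$ (which is monotonically increasing) to both sides removes the exponentials and delivers the claimed inequality \eqref{equ:convex-3}.

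There is essentially no obstacle here; the corollary is a standard passage from an ``$e^{D}$-convex'' statement to a ``max dominates at the midpoint'' statement. The only thing I would be careful about is noting explicitly that the hypothesis ``without zero elements'' on $(P_0,Q_0)$ and $(P_1,Q_1)$ is inherited directly from Proposition \ref{prop:The Convexity Property}, so the corollary requires no additional regularity assumptions on the distributions. Given how short the argument is, I would present it in a single displayed chain of inequalities rather than as a sequence of separate steps.
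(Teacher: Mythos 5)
Your proposal is correct and follows essentially the same route as the paper: invoke the exponential-convexity inequality from Proposition \ref{prop:The Convexity Property}, dominate the convex combination $(1-\lambda)e^{D_{\varpi}(P_0\parallel Q_0)}+\lambda e^{D_{\varpi}(P_1\parallel Q_1)}$ by $\max\{e^{D_{\varpi}(P_0\parallel Q_0)},e^{D_{\varpi}(P_1\parallel Q_1)}\}$, and conclude via monotonicity of the exponential/logarithm. No gaps; your remark that the ``no zero elements'' hypothesis is inherited from the proposition is also consistent with the paper.
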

\begin{proof}[{~~~Proof: }]
In view of the convexity property of M-I divergence $D_{\varpi}(P \parallel Q)$, in the case of exponential function, we have
\begin{equation}
  \begin{aligned}
    \left(1-\lambda \right)  \max \{ e^{D_{\varpi}(P_0 \parallel Q_0)}, e^{D_{\varpi}(P_1 \parallel Q_1)} \}
    +  \lambda \max \{ e^{D_{\varpi}(P_0 \parallel Q_0)}, e^{D_{\varpi}(P_1 \parallel Q_1)} \}
    \ge e^{D_{\varpi}(P_{\lambda} \parallel Q_{\lambda})}.
  \end{aligned}
\end{equation}
As a result, it can be easily seen that
\begin{equation}
  \begin{aligned}
     \max \{ e^{D_{\varpi}(P_0 \parallel Q_0)}, e^{D_{\varpi}(P_1 \parallel Q_1)} \}
    \ge e^{D_{\varpi}(P_{\lambda} \parallel Q_{\lambda})}.
  \end{aligned}
\end{equation}
Further, we can gain this corollary by use of the monotonicity of exponential function.
\end{proof}
\begin{cor}
For any probability distributions $P$, $Q_0$ and $Q_1$ which consist of positive elements, and $\lambda \in (0, 1)$, we have,
\begin{equation}
  \begin{aligned}
    \left(1-\lambda \right)e^{D_{\varpi}(P \parallel Q_0)}+\lambda e^{D_{\varpi}(P \parallel Q_1)}
    \ge e^{D_{\varpi}(P \parallel (1-\lambda)Q_0+\lambda Q_1)}.
  \end{aligned}
\end{equation}
\end{cor}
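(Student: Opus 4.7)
The plan is to derive this corollary as a direct specialization of Proposition~\ref{prop:The Convexity Property} (joint convexity in the exponential sense). The observation that unlocks the proof is that if we plug $P_0=P_1=P$ into the joint convexity statement, then the convex combination $P_\lambda=(1-\lambda)P_0+\lambda P_1$ collapses to $P$ itself, while $Q_\lambda=(1-\lambda)Q_0+\lambda Q_1$ is exactly the mixture that appears on the right-hand side here. So the first step is simply to invoke the previous proposition with this choice and read off the conclusion.

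Concretely, I would write: by Proposition~\ref{prop:The Convexity Property} applied to the pairs $(P,Q_0)$ and $(P,Q_1)$ (both of which have positive entries by hypothesis), we obtain
\begin{equation*}
(1-\lambda)\,e^{D_\varpi(P\parallel Q_0)}+\lambda\,e^{D_\varpi(P\parallel Q_1)}
\;\ge\; e^{D_\varpi(P_\lambda\parallel Q_\lambda)},
\end{equation*}
with $P_\lambda=(1-\lambda)P+\lambda P=P$ and $Q_\lambda=(1-\lambda)Q_0+\lambda Q_1$. Substituting $P_\lambda=P$ into the right-hand side immediately yields the claim.

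If a self-contained derivation is preferred, the backup plan is to re-run the argument of Proposition~\ref{prop:The Convexity Property} in this restricted setting: use the convexity of $f(x)=xe^{\varpi x}$ on $(0,\infty)$ to obtain, for each index $i$, a pointwise inequality analogous to~(\ref{equ:convex-1}) with $p_{0,i}=p_{1,i}=p_i$, then sum over $i$ to arrive at an inequality mirroring~(\ref{equ:convex-2}), and finally exponentiate-and-subtract $\varpi$ via the definition of $D_\varpi$.

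The main (and very minor) obstacle is purely notational: to make sure the reduction $P_0=P_1=P \Rightarrow P_\lambda=P$ is stated explicitly so that the reader sees why the first argument of $D_\varpi$ on the right-hand side of the preceding proposition simplifies to $P$, and to confirm that the positivity hypothesis on $P$, $Q_0$, $Q_1$ is enough to invoke the proposition (which required both pairs to have no zero entries). No new analytic work is needed beyond citing the established joint convexity.
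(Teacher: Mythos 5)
Your proposal matches the paper's proof exactly: the paper states the corollary "can be verified by substituting $p$ for $p_0$ and $p_1$ in the convexity property," which is precisely your specialization $P_0=P_1=P$ in Proposition~\ref{prop:The Convexity Property} so that $P_\lambda=P$ and $Q_\lambda=(1-\lambda)Q_0+\lambda Q_1$. Your write-up is simply a more explicit version of the same one-line reduction.
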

This can be verified by substituting $p$ for $p_0$ and $p_1$ in the convexity property.
\subsection{The Inequality Property}
\begin{prop}\label{prop:The Inequality Property}
For two given probability distributions with the finite support set, $P=(p_1, p_2, ... , p_n)$ and $Q=(q_1, q_2, ... , q_n)$ ($p_i>0,q_i>0, i=1, 2, ..., n $), the relationship among M-I divergence $D_\varpi(P \parallel Q)$, K-L divergence $D(P \parallel Q)$ and Renyi divergence $D_\alpha(P \parallel Q)$ can be indicated as
\begin{equation}\label{equ:Inquality_property}
  \begin{aligned}
    D_\varpi(P \parallel Q) \ge D(P \parallel Q) \ge D_\alpha(P \parallel Q),
  \end{aligned}
\end{equation}
where $\varpi \ge 1$ and $\alpha \in [0, 1]$.
\end{prop}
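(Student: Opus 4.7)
The plan is to split the two-sided chain into its two halves and to handle each by a single application of Jensen's inequality, glued together by the classical observation that the Pearson chi-squared divergence upper bounds the Kullback--Leibler divergence.

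For the left half, $D_\varpi(P\parallel Q)\ge D(P\parallel Q)$ with $\varpi\ge 1$, I would first absorb the trailing $-\varpi$ into the logarithm to obtain
\[
D_\varpi(P\parallel Q)=\log\sum_{i=1}^n p_i\,e^{\varpi(p_i/q_i-1)}.
\]
Applying Jensen's inequality to the concave function $\log$ with weights $\{p_i\}$ then gives
\[
D_\varpi(P\parallel Q)\ge \varpi\sum_{i=1}^n p_i\!\left(\frac{p_i}{q_i}-1\right)=\varpi\,\chi^2(P,Q),
\]
where $\chi^2(P,Q)=\sum_i p_i^2/q_i-1$ is the Pearson chi-squared divergence and is non-negative. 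Because $\varpi\ge 1$, this yields $D_\varpi(P\parallel Q)\ge \chi^2(P,Q)$. I would then close this half using the elementary inequality $\log x\le x-1$ applied termwise to $x=p_i/q_i$, which gives $D(P\parallel Q)\le \sum_i p_i(p_i/q_i-1)=\chi^2(P,Q)$.

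For the right half, $D(P\parallel Q)\ge D_\alpha(P\parallel Q)$ with $\alpha\in[0,1]$, I would rewrite Renyi divergence as a $P$-expectation:
\[
D_\alpha(P\parallel Q)=\frac{1}{\alpha-1}\log\sum_{i=1}^n p_i\!\left(\frac{q_i}{p_i}\right)^{1-\alpha}.
\]
A second use of Jensen's inequality for $\log$ gives $\log\sum_i p_i(q_i/p_i)^{1-\alpha}\ge (1-\alpha)\sum_i p_i\log(q_i/p_i)=-(1-\alpha)D(P\parallel Q)$, after which dividing through by $\alpha-1\le 0$ flips the inequality to $D_\alpha(P\parallel Q)\le D(P\parallel Q)$. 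The boundary cases are handled by inspection: $\alpha\to 1$ is the known continuous limit $D_\alpha\to D$, and $\alpha=0$ reduces to $-\log\sum_{i:p_i>0}q_i\le 0\le D(P\parallel Q)$ (which is trivial under the standing positivity hypothesis on $p_i,q_i$).

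The main obstacle I anticipate is, essentially, bookkeeping: in the Renyi step one must track the sign of $\alpha-1$ carefully, since an inattentive manipulation flips the direction and yields the reverse of the intended conclusion. The rest is routine, being two symmetric applications of Jensen's inequality linked by the classical estimate $D(P\parallel Q)\le \chi^2(P,Q)$, which itself follows directly from $\log x\le x-1$.
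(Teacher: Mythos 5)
Your proof is correct, and it takes a route that is related to but genuinely distinct from the paper's. For the left inequality, the paper first establishes the pointwise bound $e^{\varpi(x-1)}\ge x$ at $\varpi=1$ (by optimizing $f(x)=e^{\varpi(x-1)}-x$), sums it against $P$ to get $D_1(P\parallel Q)\ge\log\sum_i p_i^2/q_i$, applies Jensen to compare that with $D(P\parallel Q)$, and then invokes the separate monotonicity result (Proposition~\ref{prop:The Increasing Property}) to pass from $\varpi=1$ to all $\varpi\ge1$. You instead apply Jensen once to pull the exponent out of the logarithm, landing directly on the chain $D_\varpi\ge\varpi\,\chi^2\ge\chi^2\ge D$, where the last step is $\log x\le x-1$ (the same elementary inequality the paper uses, just deployed on the other side); this handles all $\varpi\ge1$ in one stroke without appealing to the monotonicity proposition, at the cost of needing the (standard) non-negativity of $\chi^2$. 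For the right inequality the paper simply cites the known monotonicity of Renyi divergence in $\alpha$ from van Erven and Harremo\"es, whereas you give a short self-contained Jensen argument; your version is more elementary and makes the whole proposition independent of external results. The one place to be slightly careful is the $\alpha=0$ endpoint: under the standing hypothesis $p_i>0$ for all $i$ one has $\sum_{i:p_i>0}q_i=1$, so $D_0=0\le D$, exactly as your parenthetical notes (without that hypothesis the sign of $-\log\sum_{i:p_i>0}q_i$ would point the other way, but the Jensen computation still closes the case). No gaps.
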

\begin{proof}[{~~~Proof: }]
Define a function $f(x)=e^{\varpi (x-1)}-x$ with $\varpi >0$ and $x \in (0, +\infty)$.
By setting $\frac{ \partial f(x)}{\partial x}=0$, it can be readily testified that the minimum of $f(x)$ is obtained at $x_0=1+\frac{1}{\varpi}\log{\frac{1}{\varpi}}$. Furthermore,
it is not difficult to see that only when $\varpi=1$, can $f(x_0, \varpi)=\frac{1}{\varpi}(1+\log\varpi)-1$ reach the maximum $f(x_0, \varpi=1)=0$. Therefore, it is clear to see that
\begin{equation}
    f(x)=e^{\varpi(x-1)}-x \ge 0,
\end{equation}
where $\varpi=1$, $x\in (0,+\infty)$ and $f(x)=0$ for $x=1$.

Now, the proof of left hand side inequality in Eq. (\ref{equ:Inquality_property}) can be cast into the proof of $ D_\varpi(P \parallel Q) \ge D(P \parallel Q)$ with $\varpi=1$.
This is due to the monotonicity of $D_\varpi(P \parallel Q)$ in Proposition \ref{prop:The Increasing Property}.

By averaging $f(\frac{p_i}{q_i})$ in the distribution $P=(p_1, p_2, ...., p_n)$ and considering the concavity of logarithmic function,
we have
$\log { \sum_i p_i e^{\varpi(\frac{p_i}{q_i}-1)} }-\log { \sum_i p_i \frac{p_i}{q_i}  } \ge 0$,
with $\varpi=1$.
As well, in virtue of Jensen's inequality,
it is apparent that
\begin{equation}\label{equ:inequality_1}
  \begin{aligned}
    \log { \sum_i p_i e^{\varpi(\frac{p_i}{q_i}-1)} }
    - \sum_i p_i \log { \frac{p_i}{q_i} } \ge 0,
  \end{aligned}
\end{equation}
which implies that $D_\varpi(P \parallel Q) \ge D(P \parallel Q)$ does work for $\varpi \ge 1$ due to the monotonicity of M-I divergence. 

Furthermore, $D_\alpha(P\parallel Q)$ keeps increasing in the order of $\alpha$, mentioned in Theorem 3 of \cite{Renyi-Divergence-and-Kullback-Leibler-Divergence}. Correspondingly, we have
\begin{equation}\label{equ:inequality_2}
  \begin{aligned}
    D(P \parallel Q)=\sup_{0< \alpha <1}D_\alpha(P\parallel Q).
  \end{aligned}
\end{equation}
Therefore, by combining Eq. (\ref{equ:inequality_1}) and Eq. (\ref{equ:inequality_2}), the inequality property of $D_\varpi(P \parallel Q)$ can be proved readily.
\end{proof}
\begin{rem}
According to the inequality property, the distance between two adjacent distributions can be amplified by the measure of M-I divergence. Moreover, M-I divergence is more sensitive than the other divergences to measure the distance between two nonadjacent distributions. Thus, it is more efficient for M-I divergence to distinguish two distributions.
\end{rem}
\section{Estimation of M-I Divergence}
\subsection{The Multidimensional Discrete Kernel Estimator}\label{III_A}
\subsubsection{Multidimensional Kernel with Weight Window}\ \par
With regard to the discrete kernel, there is a general definition to characterize it specifically
according to \cite{Discrete-triangular-distributions-and-nonparametric-estimation} as follows.
\begin{defn}\label{defn:2}
Let $\mathbb{S}$ be the finite support of the unknown probability mass function (p.m.f), to be estimated, with $x_i$ an element in $\mathbb{S}$. A p.m.f $K_{x_i,s}(\cdot)$ on support $\mathbb{S}_x$ (not depending on $s$) is regard as a discrete kernel with the parameter $s>0$, if it satisfies the following conditions:
\begin{subequations}
 \begin{align}
   &\begin{aligned}\label{equ:kernel_condition1}
      x_i \in \mathbb{S}_x,
    \end{aligned}\\
   &\begin{aligned}\label{equ:kernel_condition2}
      \lim_{s \to 0}\mathbb{E}(Z_{x_i,s})=x_i,
    \end{aligned}\\
   &\begin{aligned}\label{equ:kernel_condition3}
      \lim_{s \to 0}{\rm Var}(Z_{x_i,s})=0,
    \end{aligned}
 \end{align}
\end{subequations}
where $Z_{x_i,s}$ is a discrete random variable with p.m.f $K_{x_i,s}(\cdot)$.
\end{defn}
Based on the above characteristics of the discrete kernel, some special kernel functions can be designed in various ways. As well, we present a kernel estimator with the weight window for multidimensional discrete distribution as follow.
\begin{defn}\label{defn:3}
Let $\bm X_1 , \bm X_2, ..., \bm X_N$ be independent and identically distributed (i.i.d) multidimensional random variables with $d$-dimensional multivariate p.m.f $p(\bm x_i)$ on finite support $\mathbb{U}=[a_1, a_2, ..., a_L]^d$. A discrete kernel estimator $\widetilde p_s(\bm x_i)$ with a weight window is defined as
\begin{equation}\label{equ:weight_kernel}
  \begin{aligned}
    \widetilde p_s(\bm x_i) &= \frac{1}{N} \sum_{k=1}^N K_{\bm x_i,s,\tilde d}(\bm X_k)
                 = \frac{1}{N} \sum_{k=1}^N \sum_{\bm x_j \in \mathbb{U}} W(s, \bm x_i, \bm x_j) I_{\{\bm X_k = \bm x_j\}}, ~ \bm x_{i,j} \in \mathbb{U},
  \end{aligned}
\end{equation}
in which the weight window function is
\begin{equation}\label{equ:weight_W}
    W(s,\bm x_i, \bm x_j)= \left\{
     \begin{aligned}
        & 1-s, ~~~ ~~~~~ ~~~~ ~~~~ |x_i^u-x_j^u|=0\\
        & \frac{s}{(2\tilde{d}+1)^d-1}, 0<|x_i^u-x_j^u| \le \tilde d \\
        & 0, ~~~ ~~~~~ ~~~~ ~~~~ ~~~~~ |x_i^u-x_j^u| > \tilde d
  \end{aligned}
  \right.,
\end{equation}
where $u \in \{1, 2, ..., d\}$ denotes the dimension order, $s$ is a smoothing parameter, the window size $\tilde d$ is the distance of support indexes between $\bm x_i$ and $\bm x_j$ in every dimension, $L$ is the size of support indexes in every dimension, $I_{\{\bm X_k= \bm x_j\}}$ is the indicator function and $K_{ {\bm x_i},s,\tilde d}$ is the kernel function with $0 \le s <1$, $\bm x_i \in \mathbb{U}$ and $\tilde d<\frac{L-1}{2}$.
\end{defn}
From the above definition of multidimensional kernel estimator with the weight window, it can be clearly noted that when the smoothing parameter (or weight parameter) satisfies $s=0$, the discrete weight window function $W(s,\bm x_i, \bm x_j)$ degenerates into the indicator function $I_{\{\bm x_i= \bm x_j\}}$. As well, it is readily seen that regardless of weight parameter $s$ or variable $\bm x_i$, the sum of $W(s,\bm x_i, \bm x_j)$ for all $\bm x_j \in \{ \bm x_j: 0\le |x_i^u-x_j^u|\le \tilde d, \bm x_j \in \mathbb{U} \}$ fulfills
\begin{equation}
    \sum_{\bm x_j \in \mathbb{U}} W(s,\bm x_i, \bm x_j)=1.
\end{equation}

\begin{rem}\label{rem:tilde_p}
As far as the kernel estimator $\widetilde p_s(\bm x_i)$ is concerned, it is the core idea that relative frequencies derived from plug-in estimator are weighted to constitute the p.m.f estimator. In this way, more information of samples can be made use of to estimate every probability element in p.m.f. Furthermore, it is implied that the performance of the estimator $\widetilde p_s(\bm x_i)$ mainly depends on the selection of weight parameter in the case of the given p.m.f and known samples. In addition, if the weight parameter $s$ tends to zero as $N \to +\infty$, the estimator $\widetilde p_s(\bm x_i)$ will tend to the real $p(\bm x_i)$.
\end{rem}
\subsubsection{Selection of Kernel Weight Parameter}\ \par
We now consider selecting the weight parameter $s$, which can make an effect on the performance of kernel estimator. In general, the mean squared error (MSE) is accepted as a performance criterion for estimators. For a given $d$-dimensional multivariate p.m.f $p(\bm x_i)$ and its kernel estimator $\widetilde p_s(\bm x_i)$, the MSE can be treated as a function of $s$ as follows,
\begin{equation}\label{equ:MSE_1}
  \begin{aligned}
    & f_{\rm MSE}(s)\\
    & =\mathbb{E}\sum_{\bm x_i \in \mathbb{U} } \left[ \widetilde p_s(\bm x_i) -p(\bm x_i)\right]^2\\
    & =\sum_{\bm x_i \in \mathbb{U} }{\rm Var}\left[ \widetilde p_s(\bm x_i) \right]
      + \sum_{\bm x_i \in \mathbb{U} } {\rm Bias}^2\left[ \widetilde p_s(\bm x_i) \right]\\
    & = \frac{\sum_{\bm x_i \in \mathbb{U}}\sum_{\bm x_j \in \mathbb{U}} W^2(s, \bm x_i, \bm x_j)p(\bm x_j)}{N}
    - \frac{\sum_{\bm x_i \in \mathbb{U}} \left[ \sum_{\bm x_j \in \mathbb{U}} W(s, \bm x_i, \bm x_j)p(\bm x_j) \right]^2}{N}\\
    & + \sum_{\bm x_i \in \mathbb{U}} \left[ \sum_{\bm x_j \in \mathbb{U}} W(s, \bm x_i, \bm x_j)p(\bm x_j) -p(\bm x_i)\right]^2.
  \end{aligned}
\end{equation}

What is more,
it is readily realized that a value of $s$ can be provided by minimizing the MSE with respect to $s$. In that case, the optimal weight parameter is given by
\begin{equation}\label{equ:s_1}
  \begin{aligned}
   s_0=\arg \min_{0 \le s <1}f_{\rm MSE}(s).
  \end{aligned}
\end{equation}
By substituting Eq. (\ref{equ:weight_W}) into Eq. (\ref{equ:MSE_1}), we have
\begin{equation}\label{equ:MSE_2}
  \begin{aligned}
    f_{\rm MSE}(s)
    & = \frac{1-\sum_{\bm x_i \in \mathbb{U}}p^2(\bm x_i)}{N}+
     \frac{2}{N} \sum_{\bm x_i \in \mathbb{U}} \Bigg\{p^2(\bm x_i)-p(\bm x_i)
    -\frac{N p(\bm x_i)\sum_{\bm x_j \in \mathbb{V}_{\tilde d}}p(\bm x_j)}{(2\tilde d+1)^d -1} \Bigg\}s\\
    & + \frac{1}{N} \sum_{\bm x_i \in \mathbb{U}} \Bigg\{ (N-1)p^2(\bm x_i)
     + p(\bm x_i)+ \frac{\sum_{\bm x_j \in \mathbb{V}_{\tilde d}}p(\bm x_j)
    +(N-1)\left[\sum_{\bm x_j \in \mathbb{V}_{\tilde d}}p(\bm x_j) \right]^2 }{\left[ (2\tilde d+1)^d-1 \right]^2} \\
    &- \frac{2(N-1)p(\bm x_i)\sum_{\bm x_j \in \mathbb{V}_{\tilde d}}p(\bm x_j) }{ (2\tilde d+1)^d-1 } \Bigg\}s^2,
  \end{aligned}
\end{equation}
where the set $\mathbb{V}_{\tilde d}$ denotes $\{ \bm {\tilde x}_j: 0< |{\tilde x}_j^u-x_i^u|\le \tilde d, \bm {\tilde x}_j \in \mathbb{U} \}$.

By setting $\frac{\partial f_{\rm MSE}(s)}{\partial s}=0$, we can gain the minimum value of $f_{\rm MSE}(s)$. Therefore, it is readily seen that the optimal weight parameter is
\begin{equation}\label{equ:s_resolution}
  \begin{aligned}
    & s_0
    = \frac{1-\sum_{\bm x_i \in \mathbb{U}}p^2(\bm x_i)
    +\frac{\sum_{\bm x_i \in \mathbb{U}} \sum_{\bm x_j \in \mathbb{V}_{\tilde d}}p(\bm x_i)p(\bm x_j)}
    {(2\tilde d+1)^d -1}}{\varPhi(p,N,\tilde d)},
  \end{aligned}
\end{equation}
where the denominator is a function of $p$, $N$, and $\tilde d$, as
\begin{equation}\label{equ:s_denominator}
  \begin{aligned}
    & \varPhi(p,N,\tilde d)\\
    & =1+ \frac{\sum_{\bm x_i \in \mathbb{U}}\sum_{\bm x_j \in \mathbb{V}_{\tilde d}}p(\bm x_j)}
    {\left[(2\tilde d+1)^d -1\right]^2}\\
    & + (N-1)\Bigg\{ \sum_{\bm x_i \in \mathbb{U}}p^2(\bm x_i)
    + \frac{\sum_{\bm x_i \in \mathbb{U}}\left[\sum_{\bm x_j \in \mathbb{V}_{\tilde d}}p(\bm x_j)\right]^2}
    {\left[(2\tilde d+1)^d -1\right]^2}
    - \frac{2\sum_{\bm x_i \in \mathbb{U}}\sum_{\bm x_j \in \mathbb{V}_{\tilde d}}p(\bm x_i)p(\bm x_j)}
    {(2\tilde d+1)^d -1} \Bigg\}.
  \end{aligned}
\end{equation}\par
It is worth noting that the optimal weight parameter $s_0$ depends on the p.m.f $p(\bm x_i)$, the number of  multidimensional random variables $N$ and the distance $\tilde d$. However, the p.m.f is hardly known and needs estimating. For this reason, it can be thought over to replace the unknown $p(\bm x_i)$ with a consistent estimator, plug-in estimator $\hat p(\bm x_i)$. In that case, the suboptimal but practical solution $\hat s_0$ of weight parameter under the MSE criterion is put forward as
\begin{equation}\label{equ:hat_s_resolution}
  \begin{aligned}
    & \hat s_0
    = \frac{1-\sum_{\bm x_i \in \mathbb{U}}{\hat p}^2(\bm x_i)
    +\frac{\sum_{\bm x_i \in \mathbb{U}} \sum_{\bm x_j \in \mathbb{V}_{\tilde d}}{\hat p}(\bm x_i){\hat p}(\bm x_j)}
    {(2\tilde d+1)^d -1}}{\varPhi({\hat p},N,\tilde d)}.
  \end{aligned}
\end{equation}
\begin{rem}\label{rem:hat_s_resolution}
For a given p.m.f $p(\bm x_i)$ and $\tilde d$, it is readily seen that the suboptimal weight parameter satisfies $\hat s_0=O(1/N)$, which is similar to the optimal $s_0$. Moreover, if the weight parameter $s$ is replaced by $\hat s_0$, $f_{\rm MSE}(\hat s_0)$ will tend to zero as $N \to +\infty$. That is, the estimator $\widetilde p_{\hat s_0}(\bm x_i)$ tends to the real $p(\bm x_i)$ in the MSE criterion.
\end{rem}
\subsubsection{Performance Analysis}\ \par
In the view of MSE criterion, the multidimensional window kernel estimator $\widetilde p_{\hat s_0}(\bm x_i)$ keeps the same large-sample properties as plug-in estimator $\hat p(\bm x_i)$. However, it arises a question whether the former is superior to the latter in the performance of estimation. In order to distinguish which one is better, the measurements of MSE with respect to $\widetilde p_{\hat s_0}(\bm x_i)$ and $\hat p(\bm x_i)$ are given respectively by
\begin{equation}\label{equ:MSE_two_estimator}
  \begin{aligned}
    & {\rm MSE}\left( {\widetilde p_{\hat s_0}} \right)
    = f_{\rm MSE}(\hat s_0),\\
    & {\rm MSE}\left( {\hat {p}} \right)
    = \mathbb{E}\Big \{ \sum_{\bm x_i\in \mathbb{U}}\left[\hat p(\bm x_i)- p(\bm x_i)\right]^2 \Big \}.
  \end{aligned}
\end{equation}

Considering the definition of plug-in estimator, $\hat {p}(\bm x_i) = \frac{1}{N}\sum_{k=1}^N I_{\{\bm X_k = \bm x_i\}}$, we have
\begin{equation}\label{equ:MSE_hat_p}
  \begin{aligned}
    {\rm MSE}\left( {\hat {p}} \right)
    & = \sum_{\bm x_i\in \mathbb{U}}\left \{ \mathbb{E}\left[{\hat p}^2(\bm x_i)\right]- p^2(\bm x_i)\right \}\\
    & = \sum_{\bm x_i\in \mathbb{U}}\left[ \frac{p(\bm x_i)+(N-1)p^2(\bm x_i)}{N} - p^2(\bm x_i)\right]\\
    & = \frac{1-\sum_{\bm x_i\in \mathbb{U}}p^2(\bm x_i) }{N}.
  \end{aligned}
\end{equation}

By replacing $s$ with $\hat s_0$ in Eq. (\ref{equ:MSE_2}), the difference of the two MSE functions can be written as
\begin{equation}\label{equ:MSE_two_estimator2}
  \begin{aligned}
    {\rm MSE}\left( {\widetilde p_{\hat s_0}} \right)
    -{\rm MSE}\left( {\hat {p}} \right)
    & = \phi(\hat p,N, \tilde d){\hat s_0}
     + \psi(\hat p,N, \tilde d){\hat s_0}^2,
  \end{aligned}
\end{equation}
where $\phi(\hat p,N, \tilde d)$ and $\psi(\hat p,N, \tilde d)$ are both functions of $p$, $N$ and $\tilde d$. What is more, it is apparent that the convergence of $\phi(\hat p,N, \tilde d)$ and $\psi(\hat p,N, \tilde d)$ are $O(1)+O(1/N)$.

Due to the fact that the parameter $\hat s_0 \to 0$ as $N \to +\infty$, the $\psi(\hat p,N, \tilde d){\hat s_0}^2=O({\hat s_0}^2)$ tends to zero at a faster rate than $\phi(\hat p,N, \tilde d)\hat s_0$, as $N$ increasing. That is, the first term $\phi(\hat p,N, \tilde d)\hat s_0$ dominates the positive or negative nature of Eq. (\ref{equ:MSE_two_estimator2}).
In addition,
it is not difficult to know that $\phi(\hat p,N, \tilde d) < 0$ always holds by virtue of $p^2(\bm x_i)-p(\bm x_i)\le 0$ in the second term of Eq. (\ref{equ:MSE_2}).
Therefore, it is sure that for large enough $N$, ${\rm MSE}({\widetilde p_{\hat s_0}})<{\rm MSE}({\hat {p}})$ holds for any p.m.f ($p(\bm x_i) \ne 0$). This implies that $\widetilde p_{\hat s_0}(\bm x_i)$ has better performance than $\hat {p}(\bm x_i)$ in the MSE criterion.
\subsection{Multidimensional Weighted Ensemble Estimation}
For an ensemble of estimators $\{\hat D_{l_1}, \hat D_{l_2}, ..., \hat D_{l_{T}}\}$ of a parameter $D$, the weighted ensemble estimator with respect to the weight $\bm \lambda=\{\lambda_{l_1}, \lambda_{l_2}, ..., \lambda_{l_{T}}\}$ is defined as
\begin{equation}\label{equ:ensemble_estimator}
  \begin{aligned}
    \hat D_\lambda
    = \sum_{l\in \bar l} \lambda_{l}\hat D_l,
  \end{aligned}
\end{equation}
where the $\bar l=\{ l_1, l_2, ..., l_{T} \}$ denotes an index set. As well, the ensemble of weights is constrained by
$\sum_{l \in \bar l}\lambda_l=1$, which can ensure that the weighted ensemble estimator $\hat D_{\lambda}$ holds asymptotically unbiased in the case of the asymptotically unbiased estimators $\{\hat D_{l_1}, \hat D_{l_2}, ..., \hat D_{l_{T}}\}$.
\begin{thm}\label{thm_ensemble}
Assume the bias and variance of every estimator $ \hat D_l$ ($l\in \bar l$) satisfy the following conditions, respectively:
\begin{subequations}\label{equ:ensemble_C1}
 \begin{align}
    &\begin{aligned}
     {\rm Bias}(\hat D_l) = \sum_{j\in J}\gamma_j\varphi_j(l)\varGamma^{-j/2d}
    +\rho_{bias}(\varGamma^{-1/2} ),
    \end{aligned}\\
    &\begin{aligned}
    {\rm Var}(\hat D_l) = \rho_{var}(\varGamma^{-1/2} ),
    \end{aligned}
 \end{align}
\end{subequations}
where $\gamma_j$ are constants depending on a d-dimensional p.m.f $p(\bm x_i)$, $d$ is the dimension number, $J=\{j_i: 0<j_i\le d, 1\le i\le I ~~ and ~~ I\le T-1\}$ denotes an index set, $\varGamma$ is the number of samples, $\varphi_j(l)$ are independent functions of index $l$, and $\rho_{\tau}(\varGamma^{-1/2} )$ with any subscript $\tau$ are 
functions of $\varGamma^{-1/2}$.
Then, there exists a weight vector ${\bm \lambda^*}$ leading to
\begin{equation}\label{equ:ensemble_MSE}
    {\rm MSE}(\hat D_{\lambda^*}) = \mathbb{E}\left[ (\hat D_{\lambda^*}-D)^2 \right]
    \le \rho_{\lambda^*}(\varGamma^{-1/2} ).
\end{equation}
The weight vector ${\bm \lambda^*}$ is given by solving the following optimization problem:
\begin{equation}\label{equ:w_optimization}
\begin{aligned}
& {\bm \lambda^*} = \arg\min_{{\bm \lambda}}:{|| \bm \lambda ||}^2\\
&s.t.\left \{
\begin{aligned}
& \sum_{l\in \bar l}\lambda_l=1,\\
& \sum_{l\in \bar l}\lambda_l\varphi_{j}(l)=0, j\in J.
\end{aligned}\right.
\end{aligned}
\end{equation}
\end{thm}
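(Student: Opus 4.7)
The plan is to bound $\mathrm{MSE}(\hat D_{\lambda}) = \mathrm{Bias}^{2}(\hat D_\lambda) + \mathrm{Var}(\hat D_\lambda)$ separately, using the two constraints in (\ref{equ:w_optimization}) to kill the dominant bias terms and using $\|\bm\lambda\|^{2}$ to control the variance. First I would write, by linearity of expectation,
\begin{equation*}
\mathrm{Bias}(\hat D_\lambda) = \sum_{l\in\bar l}\lambda_l\,\mathrm{Bias}(\hat D_l) = \sum_{j\in J}\gamma_j\varGamma^{-j/2d}\!\left(\sum_{l\in\bar l}\lambda_l\varphi_j(l)\right) + \left(\sum_{l\in\bar l}\lambda_l\right)\rho_{\mathrm{bias}}(\varGamma^{-1/2}).
\end{equation*}
The constraint $\sum_l\lambda_l\varphi_j(l)=0$ annihilates every summand of order $\varGamma^{-j/2d}$ with $j\in J$, while $\sum_l\lambda_l=1$ preserves asymptotic unbiasedness and leaves only the higher-order remainder $\rho_{\mathrm{bias}}(\varGamma^{-1/2})$. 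Squaring therefore yields $\mathrm{Bias}^{2}(\hat D_\lambda)=O(\rho_{\mathrm{bias}}^{2}(\varGamma^{-1/2}))$ independent of the choice of $\bm\lambda$ that satisfies the constraints.

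For the variance, I would apply Cauchy--Schwarz to the linear combination:
\begin{equation*}
\mathrm{Var}(\hat D_\lambda) = \mathrm{Var}\!\left(\sum_{l\in\bar l}\lambda_l\hat D_l\right) \le \|\bm\lambda\|^{2}\sum_{l\in\bar l}\mathrm{Var}(\hat D_l) \le T\|\bm\lambda\|^{2}\max_{l\in\bar l}\rho_{\mathrm{var}}(\varGamma^{-1/2}).
\end{equation*}
Hence, among all $\bm\lambda$ satisfying the linear constraints, the one minimising $\|\bm\lambda\|^{2}$ simultaneously minimises this variance bound. That is precisely the convex quadratic program written in (\ref{equ:w_optimization}), so the optimiser $\bm\lambda^{*}$ controls the variance at level $\|\bm\lambda^{*}\|^{2}\rho_{\mathrm{var}}(\varGamma^{-1/2})$. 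Summing the two contributions produces a rate function $\rho_{\lambda^{*}}(\varGamma^{-1/2})$ in the sense of (\ref{equ:ensemble_MSE}).

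The last step I would verify is \emph{feasibility}: the system in (\ref{equ:w_optimization}) consists of $|J|+1\le T$ linear equations in the $T$ unknowns $\lambda_{l_1},\dots,\lambda_{l_T}$, so as long as the functions $\{1,\varphi_{j_1},\dots,\varphi_{j_I}\}$ are linearly independent when evaluated on the index grid $\bar l$, the constraint set is a nonempty affine subspace, and the strictly convex objective $\|\bm\lambda\|^{2}$ then admits the unique minimiser $\bm\lambda^{*}$ (obtainable in closed form via Lagrange multipliers). The main obstacle, and the step I would treat most carefully, is this feasibility/linear-independence requirement: without it the constraints can be inconsistent and no bias cancellation is possible. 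Assuming the $\varphi_j$ are chosen as in the kernel estimator of Section~\ref{III_A} (powers of a bandwidth-like parameter indexed by $l$), the resulting matrix is of Vandermonde type and is automatically nonsingular, which closes the argument.
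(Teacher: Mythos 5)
Your proposal is correct and follows essentially the same route as the paper's own proof: linearity of the bias together with the constraints $\sum_l\lambda_l=1$ and $\sum_l\lambda_l\varphi_j(l)=0$ to cancel the $\varGamma^{-j/2d}$ terms, the Cauchy--Schwarz inequality to bound $\mathrm{Var}(\hat D_\lambda)$ by $\|\bm\lambda\|^2$ times the individual variances, and feasibility of the quadratic program from $I\le T-1$. Your explicit attention to the linear-independence (Vandermonde) condition needed for the constraint system to be consistent is a point the paper's proof glosses over, and is a worthwhile refinement rather than a deviation.
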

\begin{proof}[{~~~Proof:}]
For the ensemble of estimators $\{ \hat D_l \}_{l\in \bar l}$, the bias of the weighted ensemble estimator $\hat D_{\lambda}$ is given by
\begin{equation}\label{equ:ensemble_bias}
\begin{aligned}
    {\rm Bias}(\hat D_{\lambda})
    & =\mathbb{E}\bigg \{ \sum_{l\in \bar l}\lambda_l \hat D_l -D \bigg \}
    = \sum_{l\in \bar l}\lambda_l {\rm Bias}(\hat D_l)\\
    & = \sum_{l\in \bar l}\sum_{j\in J}\gamma_j \lambda_l \varphi_j(l) \varGamma^{-j/2d}
    +\rho_{\lambda bias}(\varGamma^{-1/2} ).
\end{aligned}
\end{equation}

Considering the Cauchy-Schwartz inequality, it is not difficult to derive the variance of $\hat D_{\lambda}$ as follows:
\begin{equation}\label{equ:ensemble_var}
\begin{aligned}
    & {\rm Var}(\hat D_{\lambda})
    = \mathbb{E} \bigg \{ \sum_{l\in \bar l}\lambda_l [\hat D_l - \mathbb E(\hat D_l) ]\bigg \}^2
    \le \sum_{l\in \bar l}{\lambda_l}^2 \sum_{l\in \bar l}\mathbb{E}\left[ \hat D_l - \mathbb E(\hat D_l) \right]^2
    = ||\bm \lambda||^2{\rm Var}(\hat D_{l}).
\end{aligned}
\end{equation}\par
According to the condition $I \le T-1$, it is readily seen that there exists at least one solution for the constraint conditions of Eq. (\ref{equ:w_optimization}). Thus, there is a solution to minimize $||\bm \lambda||_2^2$, which can reduce the bias of the ensemble estimator to $\rho_{\lambda bias}(\varGamma^{-1/2} )$ and limit the contribution of the variance. Then, the MSE of ensemble estimator with respect to the optimal solution $\bm \lambda^*$ can be derived as
\begin{equation}\label{equ:ensemble_MSE 2}
\begin{aligned}
    {\rm MSE}(\hat D_{\lambda^*})
    & = {\rm Bias}^2(\hat D_{\lambda^*})+{\rm Var}(\hat D_{\lambda^*})\\
    & \le \rho_{\lambda^* bias}^2(\varGamma^{-1/2} )+ ||\bm \lambda^*||^2\rho_{\lambda^* var}(\varGamma^{-1/2} )\\
    & = \rho_{\lambda^*}(\varGamma^{-1/2} ),
\end{aligned}
\end{equation}
which can verify the theorem.
\end{proof}
In addition, from Theorem \ref{thm_ensemble}, it is not difficult to derive the corollary \ref{cor_ensemble} by replacing functions $\rho_{\tau}(\cdot)$ with order $O(\cdot)$ or $o(\cdot)$ as follows.
\begin{cor}\label{cor_ensemble}
For the bias and variance of the ensemble of estimators $\{ \hat D_l\}_{l\in \bar l}$, the following conditions are satisfied as
\begin{subequations}\label{equ:ensemble_cor_C1}
 \begin{align}
    &\begin{aligned}\label{equ:ensemble_cor_C1_a}
    {\rm Bias}(\hat D_l) =  \sum_{j\in J}\gamma_j\varphi_j(l)\varGamma^{-j/2d}
    + o\left(\varGamma^{-1/2}\right),
    \end{aligned}\\
    &\begin{aligned}\label{equ:ensemble_cor_C1_b}
    {\rm Var}(\hat D_l) =O\left(\varGamma^{-1}\right).
    \end{aligned}
 \end{align}
\end{subequations}
Then, there exists a weight vector $\bm \lambda^*$ given by Eq. {\rm (\ref{equ:w_optimization})}, which can lead to
\begin{equation}\label{equ:ensemble_cor_MSE}
    {\rm MSE}(\hat D_{\lambda^*})= O\left(\varGamma^{-1}\right).
\end{equation}
In order to obtain the above convergence rate of MSE, it is sufficient for
$\sum_{l\in \bar l}\lambda_l\varphi_j(l)\varGamma^{-j/2d} $ to be of order $O(\varGamma^{-1/2})$. Thus, the optimal weight vector can be determind as
\begin{equation}\label{equ:w_optimization_cor}
\begin{aligned}
& {\bm \lambda^*} = \arg\min_{\bm \lambda} \delta\\
&s.t.\left \{
\begin{aligned}
& \sum_{l\in \bar l}\lambda_l=1,\\
& \big|\sum_{l\in \bar l}\lambda_l\varphi_{j}(l)\varGamma^{1/2-j/2d}\big| \le \delta,\quad j\in J,\\
& ||\bm \lambda||_2^2\le \epsilon,
\end{aligned}\right.
\end{aligned}
\end{equation}
where the parameter $\epsilon$ is small enough.
\end{cor}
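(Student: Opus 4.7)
The plan is to mirror the argument of Theorem~\ref{thm_ensemble} but track the orders explicitly using the relaxed constraints in the new optimization problem. First, I would write the bias of the weighted ensemble estimator as $\mathrm{Bias}(\hat D_\lambda)=\sum_{l\in\bar l}\lambda_l\,\mathrm{Bias}(\hat D_l)$, and substitute the assumption (\ref{equ:ensemble_cor_C1_a}) to obtain
\begin{equation*}
\mathrm{Bias}(\hat D_\lambda) = \sum_{j\in J}\gamma_j\Bigl(\sum_{l\in\bar l}\lambda_l\varphi_j(l)\Bigr)\varGamma^{-j/2d} + o\bigl(\varGamma^{-1/2}\bigr).
\end{equation*}
The key observation is that the constraint $\bigl|\sum_{l\in\bar l}\lambda_l\varphi_j(l)\varGamma^{1/2-j/2d}\bigr|\le\delta$ is algebraically equivalent to $\bigl|\sum_{l\in\bar l}\lambda_l\varphi_j(l)\bigr|\varGamma^{-j/2d}\le\delta\,\varGamma^{-1/2}$, so each term of the sum over $j\in J$ is controlled uniformly by $\delta\,\varGamma^{-1/2}$, and the entire bias is of order $O(\varGamma^{-1/2})$ once $\delta$ is bounded.

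Next I would bound the variance exactly as in the proof of Theorem~\ref{thm_ensemble}: applying the Cauchy--Schwarz inequality gives $\mathrm{Var}(\hat D_\lambda)\le\|\bm\lambda\|^2\sum_{l\in\bar l}\mathrm{Var}(\hat D_l)$, and the constraint $\|\bm\lambda\|_2^2\le\epsilon$ combined with assumption (\ref{equ:ensemble_cor_C1_b}) yields $\mathrm{Var}(\hat D_{\lambda^*})\le\epsilon\cdot O(\varGamma^{-1})=O(\varGamma^{-1})$. Combining the squared bias and the variance then delivers
\begin{equation*}
\mathrm{MSE}(\hat D_{\lambda^*}) = \mathrm{Bias}^2(\hat D_{\lambda^*}) + \mathrm{Var}(\hat D_{\lambda^*}) = O(\delta^2\varGamma^{-1}) + o(\varGamma^{-1}) + O(\epsilon\,\varGamma^{-1}) = O(\varGamma^{-1}),
\end{equation*}
which is the claimed convergence rate.

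Before closing, I would address feasibility of the optimization (\ref{equ:w_optimization_cor}). Since $|J|=I\le T-1$ and the original equality system in (\ref{equ:w_optimization}) already admits a solution under Theorem~\ref{thm_ensemble}, that same solution achieves $\delta=0$ in the relaxed system, so the minimizing $\bm\lambda^*$ exists and $\delta$ can indeed be taken bounded (in fact arbitrarily small, up to the constraint $\|\bm\lambda\|_2^2\le\epsilon$). The main obstacle is purely bookkeeping: one must make sure that replacing the exact bias-cancellation equalities by the slackened inequality $\delta$ does not inflate the squared bias beyond $O(\varGamma^{-1})$, which is guaranteed by the precise exponent $\tfrac12-\tfrac{j}{2d}$ chosen in the constraint. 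Everything else reduces to the same linearity-of-expectation and Cauchy--Schwarz arguments used for the theorem, so no new technical machinery is required.
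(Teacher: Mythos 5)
Your proposal is correct and follows essentially the same route as the paper, which derives the corollary by rerunning the proof of Theorem~\ref{thm_ensemble} with the $\rho_\tau(\cdot)$ terms replaced by explicit $O(\cdot)$ and $o(\cdot)$ orders: linearity of the bias, Cauchy--Schwarz for the variance, and the constraints forcing the leading bias terms down to $O(\varGamma^{-1/2})$. Your treatment is in fact more explicit than the paper's one-line justification, particularly in checking that the relaxed inequality constraint with exponent $\tfrac12-\tfrac{j}{2d}$ still yields $\mathrm{Bias}^2=O(\varGamma^{-1})$ and that feasibility follows from the equality-constrained solution of Eq.~(\ref{equ:w_optimization}).
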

\begin{rem}
For the weighted ensemble estimator, on the one hand, it possesses a distinctive superiority that the MSE is endowed with faster convergence by using the weight vector to eliminate the higher order bias terms. On the other hand, it is visible that the weighted estimator applies to the circumstance where there are estimators with different indexes.
\end{rem}
\subsection{Ensemble Estimator for M-I Divergence}
In this subsection, we focus on the estimation of M-I divergence between two $d$-dimensional multivariate distributions $P$ and $Q$ whose p.m.fs are $p(\bm x_i)$ and $q(\bm x_i)$ with the known finite support $\mathbb{U}=[a_1, a_2, ..., a_L]^d$. In terms of the definition of M-I divergence, it is apparent that the estimator of M-I divergence depends on the estimator of $F_{\varpi}(P \parallel Q)=\sum_{\bm x_i\in \mathbb{U}} p(\bm x_i)e^{ \varpi \frac{p(\bm x_i)}{q(\bm x_i)} } $, which can be approximately calculated by using the samples splitting approach as follows.

Assume that the i.i.d. random samples from $P$ are divided into two parts $\{ \bm X_1, ..., \bm X_N\}$ and $\{ \bm X_{N+1}, ...,$ $\bm X_{N+M}\}$. The latter part is used to estimate the p.m.f of $P$ at the $N$ points $\{ \bm X_1, ..., \bm X_N \}$ by means of the weighted-window kernel. Similarly, the weighted-window kernel estimator of the p.m.f of $Q$ at the $N$ points $\{ \bm X_1, ..., \bm X_N \}$ is calculated by use of the i.i.d. samples $\{ \bm Y_1, ..., \bm Y_M\}$ drawn from $Q$. Then the estimator of $F_{\varpi}(P\parallel Q)$ can be written as
\begin{equation}\label{equ:hat_F_d}
  \hat F_{\tilde d}
  = \frac{1}{N} \sum_{i=1}^N e^{ \varpi \frac{\widetilde p_{\hat s_0}(\bm X_i)}
  {\widetilde q_{\hat s_0}(\bm X_i)} },
\end{equation}
where $\widetilde p_{\hat s_0}$ and $\widetilde q_{\hat s_0}$ are weighted-window kernel estimators with the distance $\tilde d$ mentioned in the subsection \ref{III_A}.

\begin{thm}\label{thm_bias}
The bias of the estimator $\hat F_{\tilde d}$ with weighted-window kernel is given by
\begin{equation}\label{equ:theorem_Bias}
\begin{aligned}
    & {\rm Bias}(\hat F_{\tilde d})
    = \sum_{j\in J} b_j\Big(\frac{K}{M}\Big)^{j/d}
    + o\big(\frac{K}{M}\big)+ o\big(\frac{1}{K}\big)+O\big(\frac{1}{M}\big),
\end{aligned}
\end{equation}
where $K= [\frac{(2\tilde d+1)}{L}]^d M$ is a real number determined by $\tilde d <\frac{L-1}{2}$ and
the parameter $b_j$ are constants depending on the distributions $P$ and $Q$.
\end{thm}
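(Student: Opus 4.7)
The plan is to reduce the bias of $\hat F_{\tilde d}$ to a pointwise calculation by conditioning, expand the exponential in a Taylor series about the true p.m.f.\ values, and then quantify the smoothing bias and variance of the weighted-window kernel estimators. Since the samples $\{\bm Y_k\}$ used to build $\widetilde q_{\hat s_0}$ and the second batch of $P$-samples used to build $\widetilde p_{\hat s_0}$ are independent of the evaluation points $\{\bm X_1,\dots,\bm X_N\}$, iterated expectation yields $\mathbb{E}[\hat F_{\tilde d}] = \mathbb{E}_{\bm X\sim P}\bigl[\,\mathbb{E}[\,g(\widetilde p_{\hat s_0}(\bm X),\widetilde q_{\hat s_0}(\bm X))\mid \bm X\,]\,\bigr]$ with $g(x,y)=e^{\varpi x/y}$, so the outer empirical average over $i=1,\dots,N$ contributes no bias and the analysis reduces to a single point $\bm X$ with the kernel samples conditioned out.

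I would next set $\Delta p = \widetilde p_{\hat s_0}(\bm X)-p(\bm X)$ and $\Delta q = \widetilde q_{\hat s_0}(\bm X)-q(\bm X)$ and Taylor expand $g$ about $(p(\bm X),q(\bm X))$ as $g(p+\Delta p, q+\Delta q) = g(p,q) + g_p\Delta p + g_q\Delta q + \tfrac12(g_{pp}\Delta p^2 + 2 g_{pq}\Delta p\Delta q + g_{qq}\Delta q^2) + R$, with partials evaluated at $(p(\bm X),q(\bm X))$ and $R$ a cubic-order remainder. Taking the conditional expectation, the first-order terms pull out the kernel-smoothing biases $\mathbb{E}[\Delta p\mid \bm X]$ and $\mathbb{E}[\Delta q\mid \bm X]$, while the second-order terms pull out the conditional second moments of the kernel errors.

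The third step uses the explicit weighted-window structure from Definition \ref{defn:3}. Directly, $\mathbb{E}[\Delta p\mid \bm X] = \hat s_0\bigl[\tfrac{1}{(2\tilde d+1)^d-1}\sum_{\bm x_j\in\mathbb{V}_{\tilde d}}p(\bm x_j)-p(\bm X)\bigr]$, with an analogous formula for $\Delta q$. Expanding $p$ and $q$ about $\bm X$ across the symmetric window $\mathbb{V}_{\tilde d}$ and collecting powers of the effective bandwidth $h=(2\tilde d+1)/L=(K/M)^{1/d}$ produces an expansion $\sum_{j\in J}\tilde b_j(\bm X)(K/M)^{j/d}$, whose coefficients become the $b_j$ after the outer expectation, with indices $j>d$ swept into the $o(K/M)$ tail. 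For the second-order block, the variance of the weighted-window kernel splits into a plug-in-type contribution of order $1/M$ from the $(1-\hat s_0)$ diagonal weight and a localized contribution of order $1/K$ inside the window, producing the $O(1/M)+o(1/K)$ error. The cubic remainder $R$ is absorbed by H\"older and the moment bounds on $\Delta p,\Delta q$ into $o(K/M)+o(1/K)$.

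The principal obstacle is controlling the Taylor remainder uniformly, because $g(x,y)=e^{\varpi x/y}$ is singular at $y=0$ and its derivatives grow like powers of $1/y$; one must either restrict to the regime where $q$ is bounded away from zero on $\mathbb U$ or argue that with $\hat s_0=O(1/M)$ the denominator $\widetilde q_{\hat s_0}(\bm X)$ stays bounded below with overwhelming probability so that the expectation of the exponential remainder remains finite. A secondary subtlety is the bookkeeping that decides exactly which integers $j$ belong to $J$: this is determined by which powers of $h$ survive the window-symmetry cancellations in the smoothing-bias expansion and which are swept into the $o(K/M)$ tail, and making this tracking explicit is what produces the precise index set appearing in the statement.
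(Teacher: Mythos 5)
Your overall architecture (condition on $\bm X$, Taylor-expand the exponential, feed in the moments of the kernel errors) is reasonable, but the step that is supposed to produce the leading terms $\sum_{j\in J} b_j (K/M)^{j/d}$ does not actually produce them, and this is where your route breaks. You correctly compute $\mathbb{E}[\Delta p\mid\bm X]=\hat s_0\bigl[\tfrac{1}{(2\tilde d+1)^d-1}\sum_{\bm x_j\in\mathbb{V}_{\tilde d}}p(\bm x_j)-p(\bm X)\bigr]$, and you then propose to expand the bracket in powers of $h=(2\tilde d+1)/L=(K/M)^{1/d}$ to obtain the coefficients $b_j$. But the bracket is multiplied by $\hat s_0=O(1/M)$ (Remark 2 of the paper), so every term you extract this way is of order $M^{-1}(K/M)^{j/d}$ and is swallowed by the $O(1/M)$ remainder; the same happens to your second-order block, whose conditional moments are all $O(1/M)$. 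Carried through consistently, your expansion yields ${\rm Bias}(\hat F_{\tilde d})=O(1/M)$ with no surviving $(K/M)^{j/d}$ terms --- which both fails to establish the stated form and, more importantly, erases the $l$-dependent bias terms $\varphi_j(l)\varGamma^{-j/2d}$ that the ensemble weights of Theorem 1 are designed to cancel. (A secondary issue: $\hat s_0$ is data-dependent, so $\mathbb{E}[\Delta p\mid\bm X]$ is not literally $\hat s_0$ times the bracket; the correlation between $\hat s_0$ and the indicator sums must be handled, though this only affects lower-order terms.)

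The paper obtains the $(K/M)^{j/d}$ terms from a different object. It interposes the \emph{uniform} window average $\bar p(\bm x_i)=\hat U_{\tilde d,p}(\bm x_i)/V_{\tilde d}$ with $V_{\tilde d}=(2\tilde d+1)^d$, writes ${\rm Bias}(\hat F_{\tilde d})=\mathbb{E}[f(\widetilde q_{\hat s_0}/\widetilde p_{\hat s_0})-f(\bar q/\bar p)]+\mathbb{E}[f(\bar q/\bar p)-f(q/p)]$, and Taylor-expands each bracket separately (Lemmas 1--3 of the appendix). The smoothing bias of $\bar p$ relative to $p$ --- obtained by fitting a continuous interpolant $\hat f_p$ and expanding its integral over the window $\mathbb{B}_{\tilde d,\bm x}$ --- is genuinely of order $V_{\tilde d}^{j/d}\propto(K/M)^{j/d}$ with no $1/M$ prefactor, because $\bar p$ puts weight $1/V_{\tilde d}$ (not $\hat s_0/(V_{\tilde d}-1)$) on the neighboring cells; the companion term $\widetilde p_{\hat s_0}-\bar p=\hat p-\hat U_{\tilde d,p}/V_{\tilde d}+O(\hat s_0)$ carries the matching $(K/M)^{j/d}$ contributions on the other side of the split. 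If you want to keep your direct route you must either justify why the theorem's leading terms vanish (and then reconcile that with the ensemble construction that depends on them), or reorganize the expansion around $\bar p$ as the paper does.
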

\begin{thm}\label{thm_var}
The variance of the estimator $\hat F_{\tilde d}$ with weighted-window kernel is given by
\begin{equation}\label{equ:theorem_Var}
\begin{aligned}
    & {\rm Var}(\hat F_{\tilde d})
    = O\Big(\frac{1}{N}\Big).
\end{aligned}
\end{equation}
\end{thm}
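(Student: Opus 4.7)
The plan is to apply the law of total variance to decouple the randomness in the test points $\bm X_1,\ldots,\bm X_N$ from the randomness in the auxiliary sample set $\mathcal{A}=\{\bm X_{N+1},\ldots,\bm X_{N+M},\bm Y_1,\ldots,\bm Y_M\}$ that fully determines the kernel estimators $\widetilde p_{\hat s_0}$ and $\widetilde q_{\hat s_0}$. Writing $Z_i=\exp\!\bigl(\varpi\,\widetilde p_{\hat s_0}(\bm X_i)/\widetilde q_{\hat s_0}(\bm X_i)\bigr)$ so that $\hat F_{\tilde d}=N^{-1}\sum_{i=1}^{N}Z_i$, I would decompose
\[
{\rm Var}(\hat F_{\tilde d})
= \mathbb{E}\bigl[{\rm Var}(\hat F_{\tilde d}\mid\mathcal{A})\bigr]
+ {\rm Var}\bigl(\mathbb{E}[\hat F_{\tilde d}\mid\mathcal{A}]\bigr),
\]
and then bound the two pieces separately, so that all the cross-sample dependence induced by reusing $\mathcal{A}$ in every summand is isolated in the second term.

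For the conditional piece, given $\mathcal{A}$ the variables $Z_1,\ldots,Z_N$ are i.i.d.\ functions of the independent samples $\bm X_1,\ldots,\bm X_N$. Because the finite support $\mathbb{U}$ carries only strictly positive $p$ and $q$ (the hypothesis already used in Proposition~\ref{prop:The Inequality Property}), the ratio $\widetilde p_{\hat s_0}/\widetilde q_{\hat s_0}$ remains bounded by a deterministic constant on a high-probability event for $\mathcal{A}$, while on the complementary event the weight-window form in Eq.~(\ref{equ:weight_kernel}) supplies a cruder deterministic bound. The standard $1/N$ variance bound for i.i.d.\ averages of bounded random variables then yields $\mathbb{E}[{\rm Var}(\hat F_{\tilde d}\mid\mathcal{A})]=O(1/N)$.

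For the second piece I would invoke the Efron--Stein inequality on the $2M$ coordinates of $\mathcal{A}$. Replacing a single $\bm X_{N+k}$ or $\bm Y_k$ by an independent copy perturbs $\widetilde p_{\hat s_0}(\bm x)$ or $\widetilde q_{\hat s_0}(\bm x)$ by $O(1/M)$ uniformly in $\bm x\in\mathbb{U}$, owing to the $1/M$ normalization in Eq.~(\ref{equ:weight_kernel}). A first-order expansion of the exponential, whose derivative is controlled via the same boundedness argument used above, then propagates this into an $O(1/M)$ change in $\mathbb{E}[\hat F_{\tilde d}\mid\mathcal{A}]$. Summing $2M$ such squared perturbations through Efron--Stein gives ${\rm Var}(\mathbb{E}[\hat F_{\tilde d}\mid\mathcal{A}])=O(1/M)$.

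Combining, ${\rm Var}(\hat F_{\tilde d})=O(1/N)+O(1/M)$, which collapses to the claimed $O(1/N)$ in the standard sample-splitting regime $M\ge cN$ underlying Theorem~\ref{thm_bias}. The step I expect to be the main obstacle is the uniform control of $e^{\varpi\,\widetilde p_{\hat s_0}/\widetilde q_{\hat s_0}}$ on the rare event that $\widetilde q_{\hat s_0}(\bm X_i)$ is very small, since a naive worst-case bound on the exponential is far too crude; handling this cleanly will likely require a concentration bound showing $\widetilde q_{\hat s_0}$ stays bounded away from zero with probability $1-o(1/N)$, or equivalently a mild truncation of $\hat F_{\tilde d}$ whose truncation error is absorbed into the final $O(1/N)$ rate.
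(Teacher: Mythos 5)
Your proof is correct in outline but follows a genuinely different route from the paper. The paper Taylor-expands $f(x)=e^{\varpi x^{-1}}$ applied to $\widetilde q_{\hat s_0}(\bm X_i)/\widetilde p_{\hat s_0}(\bm X_i)$ around its mean, splits each centered summand into four pieces $\bm a_i,\bm b_i,\bm c_i,\bm d_i$, shows via moment lemmas (Lemma \ref{lem:E_e_p^m_e_q^n}) that the diagonal second moment is $O(1)$ so the $1/N$ prefactor delivers the rate, and then asserts that the off-diagonal term $\mathbb{E}[(\bm a_1+\cdots+\bm d_1)(\bm a_2+\cdots+\bm d_2)]$ vanishes ``due to the independent $\bm X_1$ and $\bm X_2$.'' That last step is actually the delicate one: the two summands share the same kernel estimators built from the auxiliary samples, so they are only conditionally independent, and the covariance equals ${\rm Var}\bigl(\mathbb{E}[Z_1\mid\mathcal{A}]\bigr)$ rather than zero. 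Your law-of-total-variance decomposition isolates precisely this quantity as the second term and bounds it by $O(1/M)$ via Efron--Stein (each coordinate of $\mathcal{A}$ perturbs the estimators by $O(1/M)$ uniformly on the finite support, and $2M$ squared increments of size $O(1/M^2)$ sum to $O(1/M)$), which still collapses to $O(1/N)$ under the sample split $M=\mu\varGamma$, $N=(1-\mu)\varGamma$. So your argument is, if anything, more careful about the dependence structure than the paper's, at the cost of invoking Efron--Stein where the paper uses only Taylor expansions and Cauchy--Schwarz. The one issue you flag yourself --- uniform control of $e^{\varpi\,\widetilde p_{\hat s_0}/\widetilde q_{\hat s_0}}$ on the event that $\widetilde q_{\hat s_0}$ is near or equal to zero --- is a genuine technical point, but it is equally unaddressed in the paper's proof (all its expansions presuppose the ratio is well behaved); on a finite support with $q$ bounded away from zero this event has probability $e^{-cM}$, so the truncation you suggest does absorb it into the stated rate.
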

The proof of Theorem \ref{thm_bias} and \ref{thm_var} are given in Appendix \ref{Appendix_thm_bias} and \ref{Appendix_thm_var}.

For a positive $T \ge I + 1 $ and a positive real number set $\bar l=\{ l_1, ..., l_T\}$, let $K= \varDelta\sqrt{M}$, $\varDelta=[\frac{(2\tilde d_l+1)}{L}]^d \sqrt{M}$, $l\in \bar l$, $M=\mu\varGamma$ and $N=(1-\mu)\varGamma$ with $0< \mu <1$. Note that the $l$ indexes over the distance size $\tilde d$ for the weighted-window kernel estimator.
Then, the ensemble estimator of $\hat F_{\tilde d}$ is given by
\begin{equation}\label{equ:ensemble_F}
\begin{aligned}
    \hat F_\lambda=\sum_{l\in \bar l}\lambda_l \hat F_{\tilde d_l},
\end{aligned}
\end{equation}
where $\hat F_{\tilde d_l}$ denotes a $\hat F_{\tilde d}$ with an index $l$ for different $\tilde d$.

From Theorem \ref{thm_bias} and \ref{thm_var}, it is readily seen that the biases and variances of the ensemble estimators satisfy the conditions mentioned in Eq. (\ref{equ:ensemble_cor_C1_a}) when $\varphi_j(l)=\varDelta^{j/d}$ and $J=\{ 1, ..., d\}$. Therefore, it is available to find the optimal $\bm \lambda^*$ by using Corollary \ref{cor_ensemble} so that we can improve the MSE convergence for the estimation of $F_{\varpi}(P||Q)$. That is, we can make good use of the better estimator $\hat F_{\lambda^*}$ to obtain the better estimator of M-I divergence as
\begin{equation}\label{equ:ensemble_M-I}
\begin{aligned}
    &\hat D_\varpi(P\parallel Q)=\log \hat F_{\lambda^*} -\varpi.
\end{aligned}
\end{equation}
In addition, it is easily to see that the {\rm MSE} of $\hat D_\varpi(P\parallel Q)$ is given by
\begin{equation}\label{equ:ensemble_MSE_hat_D}
\begin{aligned}
    {\rm MSE}(\hat D_\varpi(P\parallel Q))=O(\varGamma^{-1}),
\end{aligned}
\end{equation}
whose proof is given in Appendix \ref{Appendix_MSE_M-I}.

In order to summarize the above process more specifically, the ensemble estimator with weighted-window kernel for M-I divergence is listed in Algorithm \ref{alg_ensemble}.
\begin{algorithm}
\caption{Optimally weighted ensemble estimator with window kernel for M-I divergence}\label{alg_ensemble}
\begin{algorithmic}[1]
\REQUIRE $\mu$, $\epsilon$, $L$, the distance set $\{\tilde d_l:l\in \bar l \}$ with the index set $\bar l$, samples set $\{\bm Y_1, ..., \bm Y_M\}$ from distribution $Q$, samples set $\{\bm X_1, ..., \bm X_{\varGamma}\}$ from distribution $P$, dimension $d$, the parameter $\varpi$ of M-I divergence.
\ENSURE The optimally weighted ensemble estimator $\hat D_\varpi$.
\STATE $M \leftarrow \mu\varGamma$, $N \leftarrow \varGamma - M$
\STATE calculate the $\bm \lambda^*$ by use of Eq. (\ref{equ:w_optimization_cor}) with $\varphi_j(l)=\varDelta^{j/d}$, $\varDelta=[\frac{(2\tilde d_l+1)}{L}]^d \sqrt{M}$, $j\in \{1, ..., d\}$
and $l\in \bar l$.
\FORALL {$l \in \bar l$}
    \STATE $\tilde d_l$ $\leftarrow$ choosing a distance by $l \in \bar l$
    \FORALL {$\bm X_i \in \{\bm X_1, ..., \bm X_N\}$}
        \STATE $\hat p(\bm X_i)$ and $\hat q(\bm X_i)$ $\leftarrow$ $\frac{1}{M}\sum_{k=N+1}^{N+M} I_{\{\bm X_k=\bm X_i\}}$ and
        $\frac{1}{M}\sum_{k=1}^M I_{\{\bm Y_k=\bm X_i\}}$, respectively.
        \STATE smoothing parameters $\hat s_0(p)$ and $\hat s_0(q)$ $\leftarrow$ using Eq. (\ref{equ:hat_s_resolution}) with $M$, $\tilde d_l$, $\hat p(\bm X_i)$ and
        $\hat q(\bm X_i)$.
        \STATE $\widetilde p_{\hat s_0(p)}(\bm X_i)$ and $\widetilde q_{\hat s_0(q)}(\bm X_i)$ $\leftarrow$ make full use of $\hat s_0(p)$, $\hat s_0(q)$, $\hat p(\bm X_i)$ and
        $\hat q(\bm X_i)$ to
        calculate the two weighted-window kernel estimators by adopting Eq. (\ref{equ:weight_kernel}) in definition \ref{defn:3}.
    \ENDFOR
    \STATE $\hat F_{\lambda_l}\leftarrow \frac{1}{N} \sum_{i=1}^N e^{ \varpi \frac{\widetilde p_{\hat s_0(p)}(\bm X_i)}{\widetilde q_{\hat s_0(q)}(\bm X_i)} }$
\ENDFOR
\STATE $\hat F_{\lambda^*}\leftarrow \sum_{l\in \bar l}\lambda_l^* \hat F_{\tilde d_l}$
\STATE $\hat D_\varpi\leftarrow \log\hat F_{\lambda^*}-\varpi$
\end{algorithmic}
\end{algorithm}
\section{Application to Big Data Analysis}
In this section, we will discuss how to exploit divergence measures to classify or cluster the data belonging to different distributions.
In particular, we take into account the following detection problem about outlier or minority sequences in a set of sequences.
\subsection{The Model with Unknown Number of Outliers}
Assume that among a number of sample sequences, there are an unknown number of outlier sequences to be detected. The i.i.d. samples in the typical sequences are gained from a known distribution $P_{t}$, while in the outlier sequences, the i.i.d. samples are taken from an unknown distribution $ Q_{f}$. In order to design a test to detect the outlier sequences, it is necessary to construct a model applying to the problem.

Consider $T_0$ independent sequences ($T_0\ge 3$), each of which can be denoted by $ \mathcal{ X}^{(i)}$ for $i=1, ..., T_0$. As well, each $\mathcal{X}^{(i)}$ consists of $\varGamma_0$ i.i.d samples $\{ {\bm X}^{(i)}_1, ..., {\bm X}^{(i)}_{\varGamma_0} \}$ drawn from either a typical distribution $P_{t}$ or an unknown outlier distribution $Q_{f}$. It notes that there may exist $k_0$ numbers of outlier sequences, where the integer $k_0\in [0, \frac{T_0}{2})$ is uncertain. As well, the notation ${\bm X}^{(i)}_{k}$ denotes the $k$-th sample in the $i$-th sequence. Furthermore, by comparing the empirical typical distribution $\hat P_t$ with the distribution estimation $\hat P(\mathcal{X}^{(i)})$ for every $\mathcal{X}^{(i)}$, we have the following test as
\begin{equation}
 \mathcal{F} (\hat P(\mathcal{X}^{(i)}); \hat P_{t})\to
 \left\{
  \begin{aligned}
    \mathcal{F} (\hat P_{t_i}; \hat P_t),  & \quad \mathcal{X}^{(i)}\in \mathcal{M}_t  \\
    \mathcal{F} (\hat Q_{f_i}; \hat P_t),  & \quad \mathcal{X}^{(i)}\in \mathcal{M}_f
  \end{aligned}\right.,
\end{equation}
where $\mathcal{F}(f_1;f_2)$ denotes a measurement between two distribution $f_1$ and $f_2$, $\hat P_{t_i}$ and $\hat Q_{f_i}$ are estimations with respect to $P_t$ and $Q_f$, $\mathcal{M}_t$ and $\mathcal{M}_f$ denote the typical sequences set and the outlier sequences set, respectively.

In practice, our sequence model for outliers detection is applicable for the case in which the outlier distributions $Q_{f}$ is unknown a priori, whereas the typical distribution $P_{t}$ or at least the empirical distribution $\hat P_t$ is known. This is rational for many practical scenarios, in which systems regularly start without any outliers and it is easy to possess sufficient information for $P_{t}$. In addition, the study of such a model can apply to many applications, such as vacant channel detection in cognitive wireless networks, fraud and anomaly detection in large data sets, state monitoring in sensor networks and so on.

\subsection{Outlier Detection with Divergence Measures}
Considering the performance of divergence measures on distinguishing different distributions, we can use the information distances measured by divergences to detect the outliers in the above sequence model. The method of outliers detection based on the sequence model is designed as following.

We make use of the i.i.d. samples to estimate the M-I divergence between a pending sequence and the typical sequence. The M-I divergence estimation can be applicable to the outlier sequence model as a measurement for clustering. Furthermore, a clustering algorithm such as k-means can be adopted to distinguish the outlier sequences from the typical ones. The above process is more specifically summarized in Algorithm \ref{alg_outlier detection}. Similarly, it is feasible to design the outliers detection methods with other divergence measures such as K-L divergence and Renyi divergence.

\begin{figure}[!t]
\centering
\subfigure[ROC curve of different divergences]{\includegraphics[width=3.0in]{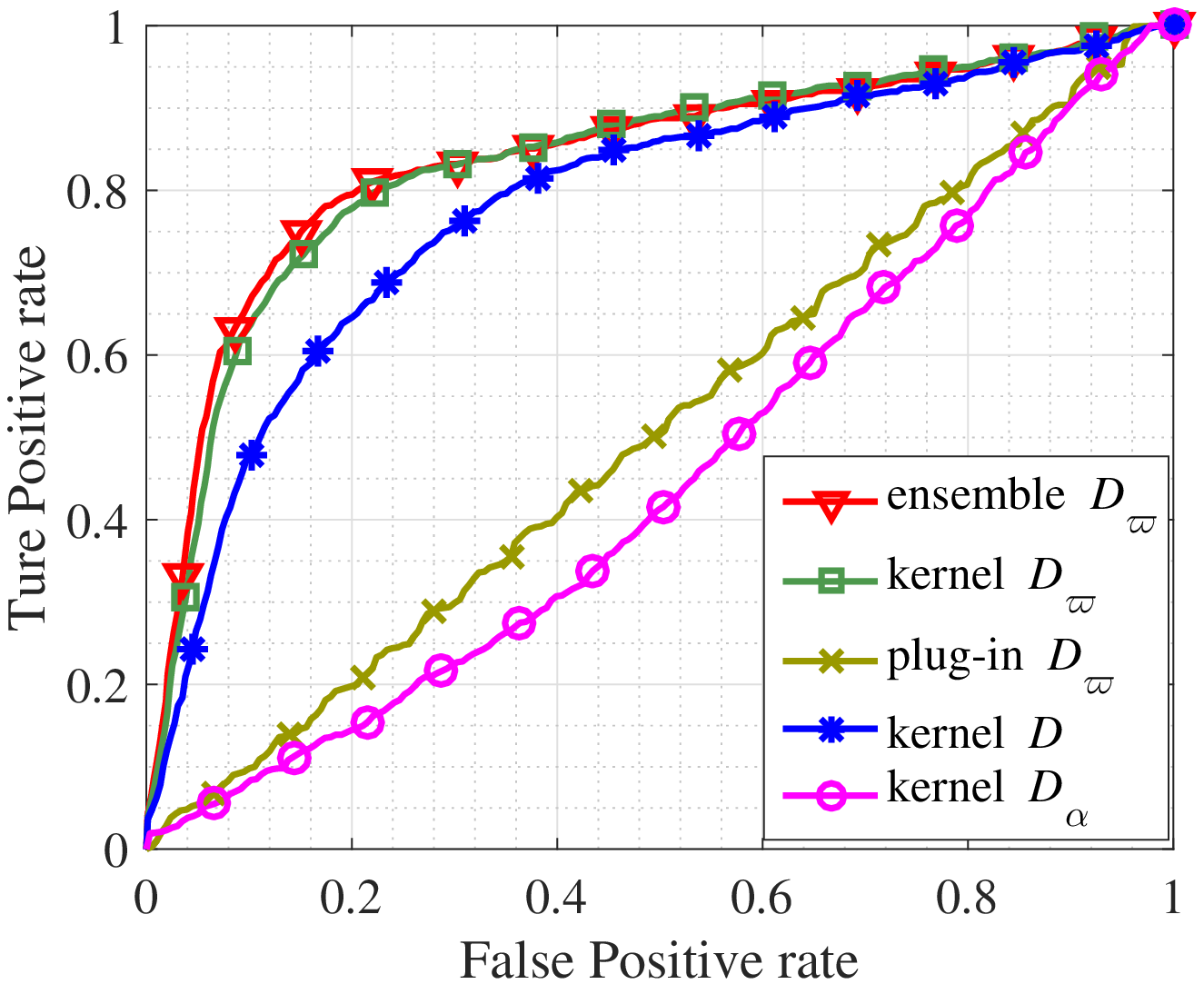}%
\label{fig_ROC of different divergences}}
\hfil
\subfigure[AUC of different divergences]{\includegraphics[width=3.0in]{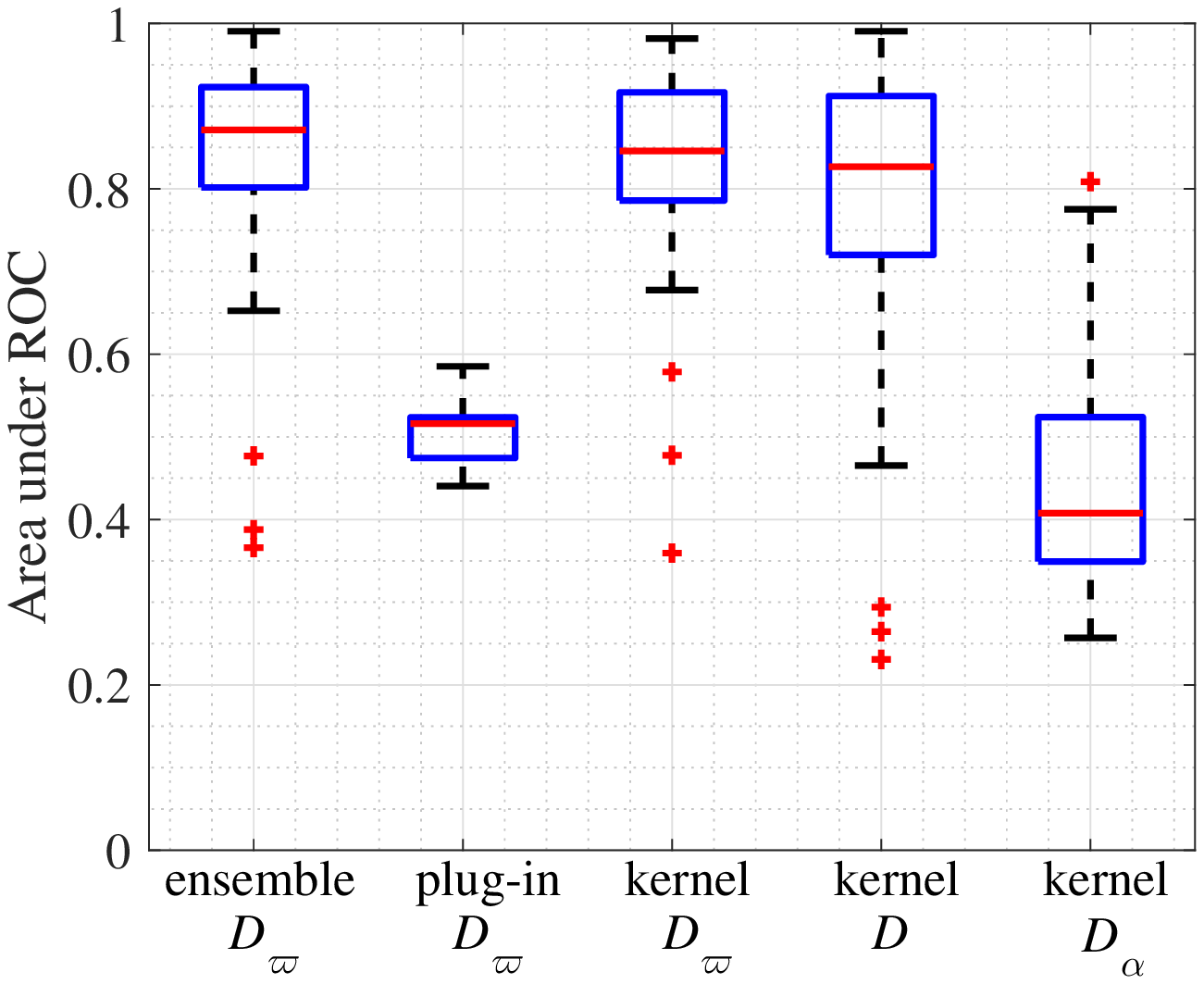}%
\label{fig_AUC_box}}
\hfil
\subfigure[F-score of different divergences]{\includegraphics[width=3.0in]{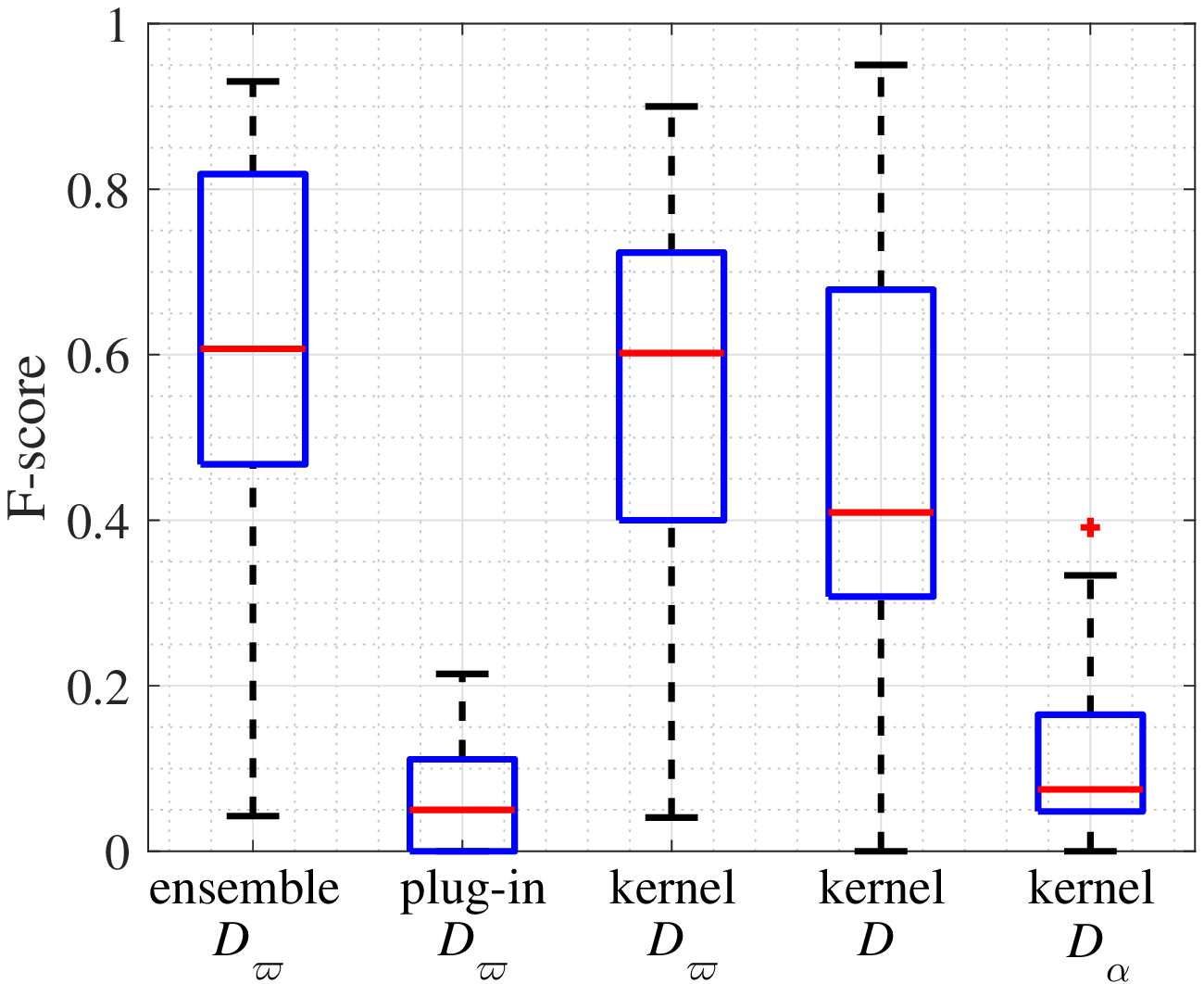}%
\label{fig_Fscore_box}}
\caption{Performance of different divergences in the example with each sequence size $\varGamma_0=6000$, sequences number $T_0=200$, outlier sequences number $k_0=20$ and the number of experiments $N_{T_0}=100$. }
\label{fig_performance}
\end{figure}
\begin{figure}[!t]
\centering
\includegraphics[width=3.1in]{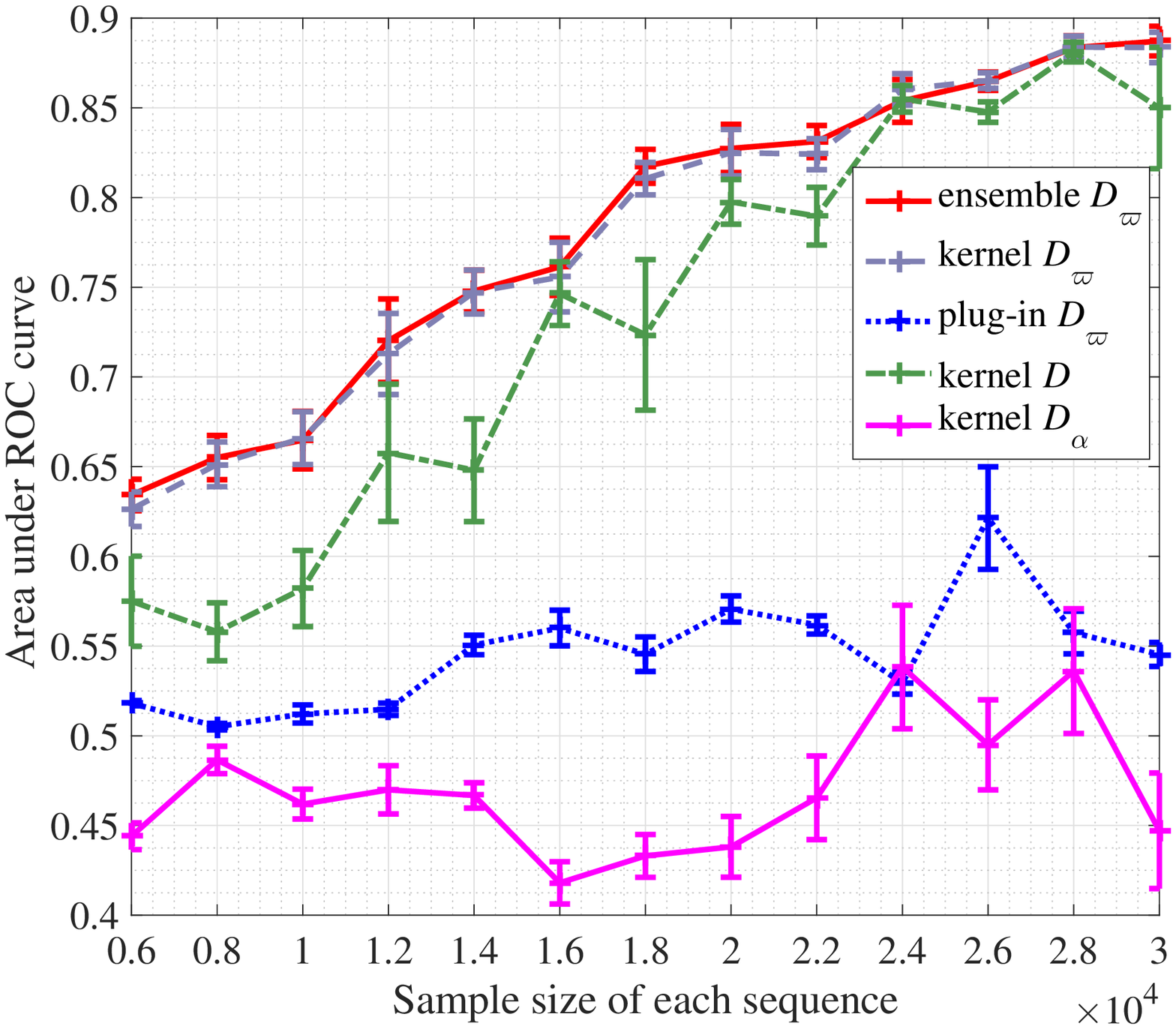}
\caption{ Means and variances of AUC with respect to different sample size in the example with sequences number $T_0=200$, outlier sequences number $k_0=20$ and the number of experiments $N_{T_0}=100$.}
\label{fig_AUC_T}
\end{figure}
To demonstrate the divergence measures' availability on outlier detection, we utilize the sequence model with unknown number of outliers to characterize a kind of outlier detection scenario. As an example, we regard a binomial distribution $\mathcal{B}(n_0, p_a)$ as the typical distribution, in which the probability is denoted as
$p_{t_i}=C^{i}_{n_0} p_a^i (1-p_a)^{n_0-i}$ with $n_0=11$ and $p_a=0.45$. By contrast, the i.i.d. samples drawn from another $\mathcal{B}(n_0, p_b)$ with $p_b=0.445$ are regarded as outliers.
Then, we can randomly generate a set of sample sequences, among which each sequence consists of $\varGamma_0$ i.i.d samples drawn from either $\mathcal{B}(n_0, p_a)$ or $\mathcal{B}(n_0, p_b)$. In terms of the sequence model, our goal is to detect the $k_0$ outlier sequences in the set of $T_0$ sequences.
\begin{algorithm}
\caption{Outlier Sequences Detection with M-I divergence}\label{alg_outlier detection}
\begin{algorithmic}[1]
\REQUIRE Pending sequences $\{ {\bm X}^{(i)}_1, ..., {\bm X}^{(i)}_{\varGamma_0} \}$ for $i=1, ..., T_0$, the typical sequence $\{ {\bm Y}_1, ..., {\bm Y}_{\varGamma_0} \}$ obeying a known empirical distribution $\hat P_t$, the parameter $\mu$, $\varpi$
\ENSURE The results of the outlier detection, $\mathcal{X}^{(i)} \in \mathcal{M}_f$
\STATE Divide $\varGamma_0$ into two parts: $M_0 \leftarrow \mu \varGamma_0$ ($0 <\mu<1 $), $N_0 \leftarrow \varGamma_0 - M_0$
\FORALL {$ i \in [1, T_0]$}
\STATE $\hat D_{\varpi,i}(\mathcal{X}^{(i)}||\mathcal{Y})$, $\hat D_{\varpi,i}(\mathcal{Y}||\mathcal{X}^{(i)})$  $\leftarrow$
calculate the estimator of M-I divergence between $\{ {\bm Y}_{1}, ..., {\bm Y}_{M_0} \}$ and $\{ {\bm X}^{(i)}_{1}, ..., {\bm X}^{(i)}_{\varGamma_0} \}$ as well as the estimator of M-I divergence between $\{ {\bm X}^{(i)}_{1}, ..., {\bm X}^{(i)}_{M_0} \}$ and $\{ {\bm Y}_{1}, ..., {\bm Y}_{\varGamma_0} \}$ respectively, according to Algorithm \ref{alg_ensemble}.
\STATE $\hat D_{\varpi,i}$ $\leftarrow$ $\{\hat D_{\varpi,i}(\mathcal{X}^{(i)}||\mathcal{Y})+\hat D_{\varpi,i}(\mathcal{Y}||\mathcal{X}^{(i)})\}/2$
\ENDFOR
\STATE Divide $\hat D_{\varpi,i}$ into the normal set $\mathcal{\hat M}_t$ or the outlier set $\mathcal{\hat M}_f$ $\leftarrow$ select a clustering algorithm such as k-means.
\STATE $\mathcal{X}^{(i)} \in \mathcal{M}_f$ $\leftarrow$ $\hat D_{\varpi,i} \in \mathcal{\hat M}_f$
\end{algorithmic}
\end{algorithm}

In order to illustrate the performance of M-I divergence on outlier detection,
we deal with the above example by means of Algorithm \ref{alg_outlier detection}. Besides, we also replace the ensemble estimator and M-I divergence in that algorithm with other estimators and divergences to make comparisons, as shown in Fig. \ref{fig_performance} and Fig. \ref{fig_AUC_T}. In our simulation, we choose the parameter $\mu=1/2$, $\varpi=1$, $\epsilon=3d$ and the distance set $\{\tilde d_l =l ; l=1,2 \}$ for the weighted ensemble estimator of M-I divergence. As well, the weight window is set to $\tilde d = 1$ in kernel estimator, which is used to estimate the discrete distribution in K-L divergence, Renyi divergence (with $\alpha =1/2$) and M-I divergence (with $\varpi =1$).

From Fig. \ref{fig_performance}, it is seen that M-I divergence outperforms K-L divergence and Renyi divergence by using the same estimator, which matches the inequality property well.
Besides, this experiment shows that M-I divergence performs better by using weighted ensemble estimator than other estimators, owing to the convergence improvement. Moreover, due to the smaller reduction in the convergence term with large samples, the kernel estimator is close to the ensemble estimator for M-I divergence.

The Fig. \ref{fig_AUC_T} shows that the result of outliers detection tends to be more precise as sample size increasing for M-I and K-L divergence estimated by the ensemble or kernel estimator. However, that is not dramatically improved for Renyi and M-I divergence with plug-in estimator, which may result from the small distinction between the typical distribution $\mathcal{B}(n_0, p_a)$ and the outlier distribution $\mathcal{B}(n_0, p_b)$.
In short, it still illustrates that M-I divergence can distinguish two closing distributions more clearly rather than K-L divergence and Renyi divergence.

\section{Conclusion}
In this paper, we investigated the information distance problem and proposed a parametric information divergence, i.e., M-I divergence, which measures the distinction between two discrete distributions, similar to K-L divergence and Renyi divergence. Furthermore, M-I divergence has its own dramatic properties on amplifying the distance between adjacent distributions while maintaining enough gap between two nonadjacent ones. This makes M-I divergence as a promising decision making tool for the statistical big data analysis. We have investigated several fundamental properties of M-I divergence, and proposed a multidimensional kernel estimator with a weight window to estimate probability distributions in M-I divergence. Furthermore, we also presented a M-I divergence estimation algorithm by means of the weighted ensemble estimator with the window kernel. In addition, we have investigated the performance of M-I divergence on decision making of classification or clustering and applied it to design an algorithm about the outlier detection problem. In the future, we plan to investigate a parameter selection method for M-I divergence and design algorithms for other practical applications in big data.

%
%

%
%
%
\appendices
\section{Proof of Theorem \ref{thm_bias}}\label{Appendix_thm_bias}
Define $f(x)=e^{\varpi x^{-1}}$ with $\varpi > 0$ and $x > 0$. Note that
${\rm Bias}(\hat F_{\tilde d})= \mathbb{E}[f(\frac{\widetilde q_{\hat s_0}(\bm X)}{\widetilde p_{\hat s_0}(\bm X)})-f(\frac{\bar q(\bm X)}{\bar p(\bm X)})]
+\mathbb{E}[f(\frac{\bar q(\bm X)}{\bar p(\bm X)})- f(\frac{q(\bm X)}{p(\bm X)})]$.
In order to find bounds for these terms, the Taylor series expansion of $f(\frac{\widetilde q_{\hat s_0}(\bm X)}{\widetilde p_{\hat s_0}(\bm X)})$ around $\frac{\bar q(\bm X)}{\bar p(\bm X)}$ and $f(\frac{\bar q(\bm X)}{\bar p(\bm X)})$ around $\frac{q(\bm X)}{p(\bm X)}$ are given by, respectively,

\begin{subequations}
 \begin{align}
 &\begin{aligned}\label{equ:bias1_expension}
    f\left(\frac{\widetilde q_{\hat s_0}(\bm X)}{\widetilde p_{\hat s_0}(\bm X)}\right)
    & =\sum_{i=0}^2 \frac{f^{(i)}\left( \frac{\bar q(\bm X)}{\bar p(\bm X)} \right)}{i!} \tilde h^{i}(\bm X)\\
    & \qquad \qquad \qquad + \frac{1}{6}f^{(3)}(\tilde \xi_{\bm X}) \tilde h^{3}(\bm X),
  \end{aligned}\\
  &\begin{aligned}\label{equ:bias2_expension}
    f\left(\frac{\bar q(\bm X)}{\bar p(\bm X)}\right)
    & =\sum_{i=0}^2 \frac{f^{(i)}\left( \frac{q(\bm X)}{p(\bm X)} \right)}{i!} \check h^{i}(\bm X)\\
    & \qquad \qquad \qquad + \frac{1}{6}f^{(3)}(\check \xi_{\bm X}) \check h^{3}(\bm X),
  \end{aligned}
 \end{align}
\end{subequations}
where $\tilde \xi_{\bm X}\in \left( \frac{\bar q(\bm X)}{\bar p(\bm X)},
\frac{\widetilde q_{\hat s_0}(\bm X)}{\widetilde p_{\hat s_0}(\bm X)} \right)$ and
$\check \xi_{\bm X}\in \left( \frac{q(\bm X)}{p(\bm X)},
\frac{\bar q(\bm X)}{\bar p(\bm X)} \right)$
come from the mean value theorem, $\tilde h(\bm X)$ denotes
$ \frac{\widetilde q_{\hat s_0}(\bm X)}{\widetilde p_{\hat s_0}(\bm X)}
-\frac{\bar q(\bm X)}{\bar p(\bm X)}$ and $\check h(\bm X)$ denotes
$ \frac{\bar q_{\hat s_0}(\bm X)}{\bar p_{\hat s_0}(\bm X)}
-\frac{q(\bm X)}{p(\bm X)}$. As well, the $\bar q(\bm x_i)$ is defined as

\begin{equation}\label{equ:estimator2_pq}
  \begin{aligned}
    & \bar q(\bm x_i)=\frac{\hat U_{\tilde d, q}({\bm x}_i)}{V_{\tilde d}}
    =\frac{\sum_{j=1}^M I_{\{\bm Y_j\in \mathbb{U}_{\tilde d, \bm x_i} \} }}{M V_{\tilde d}},
  \end{aligned}
\end{equation}
where $\bm x_i \in \mathbb{U}$, and $V_{\tilde d}=(2\tilde d+1)^d$ is the number of set $\mathbb{U}_{\tilde d, \bm x_i}=\{ \bm x_j: |x_j^u-x_i^u|\le \tilde d, \bm x_j \in \mathbb{U} \}$. Similarly, the $\bar p(\bm x_i)$ can be calculated in the same way.

\begin{lem}\label{lem:E_e1^m_p_q}
Let a $d$-dimension variable $\bm X$ be a realization of p.m.f $p$ independent of the window kernel estimators $\widetilde q_{\hat s_0}$ and $\widetilde p_{\hat s_0}$.
As well, p.m.f $p$ and $q$ are on the same support $\mathbb{U}=[a_1, ..., a_L]^d$.
Then, for a subscript $z$ denoting $p$ or $q$, we have
\begin{subequations}\label{equ:E_e1^m_p_q}
  \begin{align}
    &
    \begin{aligned}
    & \tilde e_z(\bm X)\\
    & = \sum_{i=1}^d \tilde {\hat c}_{e_z,i,\tilde d}(\bm X)\Big(\frac{K}{M}\Big)^{i/d}+ o\Big(\frac{K}{M}\Big)
    + O\Big(\frac{1}{M}\Big),
    \end{aligned}\\
    &
    \begin{aligned}
    & \mathbb{E}[\tilde e_z^m(\bm X)]\\
    & = \left(\sum_{i=1}^d \hat c_{e_z^m,i,\tilde d}\Big(\frac{K}{M}\Big)^{i/d}\right)^m
    +o\Big(\frac{K}{M}\Big)
    + O\Big(\frac{1}{M}\Big),
    \end{aligned}
  \end{align}
\end{subequations}
where $K=\varDelta \sqrt{M}$ with $\varDelta=[\frac{(2\tilde d_l+1)}{L}]^d \sqrt{M}$, $\tilde e_{p}(\bm X)$ and $\tilde e_{q}(\bm X)$ denote \{$\tilde p_{\hat s_0}(\bm X)-\bar p(\bm X)$\} and \{$\tilde q_{\hat s_0}(\bm X)-\bar q(\bm X)$\} respectively, as well as, $\hat c_{e_z^m,i,\tilde d}(\bm X)$ is a function of $\bm X$ and $\tilde d$, $\hat c_{e_z^m,i,\tilde d}$ is a function of $d$.
\end{lem}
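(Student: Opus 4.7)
The plan is to derive a pointwise expansion of $\tilde e_z(\bm X)$ and then lift it to the $m$-th moment. First, write both $\widetilde p_{\hat s_0}(\bm X)$ and $\bar p(\bm X)$ as linear functionals of the plug-in counts $\hat p_M(\bm x_j):=\frac{1}{M}\sum_{k=N+1}^{N+M}I_{\{\bm X_k=\bm x_j\}}$ over $\mathbb U_{\tilde d,\bm X}$: the weighted-window kernel assigns mass $1-\hat s_0$ at the centre and $\hat s_0/(V_{\tilde d}-1)$ at each of the $V_{\tilde d}-1$ neighbours in $\mathbb V_{\tilde d,\bm X}$, whereas $\bar p$ uses uniform mass $1/V_{\tilde d}$. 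Collecting terms produces
\[
\tilde e_z(\bm X)\;=\;\biggl[\frac{1}{V_{\tilde d}}-\frac{\hat s_0}{V_{\tilde d}-1}\biggr]\sum_{\bm x_j\in\mathbb V_{\tilde d,\bm X}}\bigl[\hat p_M(\bm X)-\hat p_M(\bm x_j)\bigr],
\]
which cleanly separates the $\hat s_0$-dependent prefactor from a local discrepancy sum.

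I would then expand each factor. Remark \ref{rem:hat_s_resolution} gives $\hat s_0=O(1/M)$, so the prefactor equals $1/V_{\tilde d}+O(1/M)$. For each summand, split $\hat p_M=p+(\hat p_M-p)$, where the stochastic residual has centred moments of order $O(M^{-m/2})$ and is independent across disjoint support points. Treating $p$ through a smooth local interpolant, a Taylor-style expansion about $\bm X$ yields $p(\bm X)-p(\bm x_j)=\sum_{i\ge 1}\alpha_i(\bm X;\bm x_j-\bm X)$, with $\alpha_i$ homogeneous of degree $i$ in the coordinate offsets $x_j^u-X^u$. Summing over the (approximately) symmetric window $\mathbb V_{\tilde d,\bm X}$, and using $\tilde d\sim V_{\tilde d}^{1/d}$ together with $V_{\tilde d}/L^d=K/M$, the surviving geometric moments scale as $V_{\tilde d}^{1+i/d}L^{-i}$; after multiplication by the prefactor $1/V_{\tilde d}$ they assemble into $\sum_{i=1}^d\tilde{\hat c}_{e_z,i,\tilde d}(\bm X)(K/M)^{i/d}$, with coefficients dictated by the local derivatives of $p$ and the window's geometric moments. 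The stochastic residuals, after multiplication by the prefactor and summation over the $V_{\tilde d}-1$ neighbours, contribute a mean-zero fluctuation absorbed into the $o(K/M)+O(1/M)$ remainder.

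For the $m$-th moment I would raise the pointwise expansion to the $m$-th power and take expectation over $\bm X\sim p$ jointly with the kernel samples. Averaging the $\bm X$-dependent coefficients against $p$ produces the constants $\hat c_{e_z^m,i,\tilde d}$, so the deterministic block exponentiates cleanly to $\bigl(\sum_{i=1}^d\hat c_{e_z^m,i,\tilde d}(K/M)^{i/d}\bigr)^m$. Cross terms mixing the deterministic block with the $M^{-1/2}$-scale stochastic residuals are controlled by Hölder's inequality together with the standard $O(M^{-m/2})$ bound on centred binomial moments, so every such cross term carries at least one power of $M^{-1/2}$ and is swallowed by the stated remainder; the $z=q$ branch follows identically after swapping the sample sequence. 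The step I expect to require the most care is the Taylor expansion on the discrete support and the symmetry book-keeping inside $\mathbb V_{\tilde d,\bm X}$: for $\bm X$ near $\partial\mathbb U$ the window loses central symmetry and the odd-$i$ coefficients need a separate boundary treatment, and in the $m$-th moment calculation one must verify that none of the many deterministic-stochastic cross terms survives above $o(K/M)+O(1/M)$.
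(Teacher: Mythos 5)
Your proposal follows essentially the same route as the paper: you isolate $\widetilde p_{\hat s_0}-\bar p$ as an $O(\hat s_0)$-perturbed difference between the point mass and the uniform window average, extract the $(K/M)^{i/d}$ powers from a Taylor expansion of the density over the window (via $V_{\tilde d}=(2\tilde d+1)^d=\tfrac{K}{M}L^d$), and lift to $m$-th moments with the binomial theorem plus Cauchy--Schwarz/H\"older control of cross terms. The only material difference is bookkeeping: the paper Taylor-expands a continuous interpolant $\hat f_p$ of the \emph{empirical} pmf, so the sampling fluctuation is carried inside the random coefficients $\tilde{\hat c}_{e_z,i,\tilde d}(\bm X)$, whereas you expand the true $p$ and push the $O_P(M^{-1/2})$ fluctuation into the remainder --- which is legitimate only under the paper's scaling $K=\varDelta\sqrt{M}$ with $\varDelta\to\infty$, so that $M^{-1/2}=o(K/M)$.
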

\begin{proof}[{~~~Proof:}]
As far as $\tilde e_p(\bm X)$ is concerned, it is easy to see
\begin{equation}\label{equ:e1_p}
  \begin{aligned}
    \tilde e_p(\bm X)
    & =\widetilde p_{\hat s_0}(\bm X)-\bar p(\bm X)\\
    & = \frac{1}{N} \sum_{k=1}^N \left \{ \sum_{\bm x_j \in \mathbb{U}} W(\hat s_0, \bm X, \bm x_j) I_{\{\bm X_k = \bm x_j\}}-\frac{I_{\{\bm X_k\in \mathbb{U}_{\tilde d, \bm X} \} }}{V_{\tilde d}} \right \}\\
    & = \frac{1}{N} \sum_{k=1}^N \left \{ I_{\{\bm X_k = \bm X\}}-\frac{I_{\{\bm X_k\in \mathbb{U}_{\tilde d, \bm X} \} }}{V_{\tilde d}} \right \}
     -\hat s_0\left \{ \frac{1}{N}\sum_{k=1}^N \Bigg \{ \frac{\sum_{\substack{\bm x_j\in \mathbb{U}_{\tilde d, \bm X}\\ \bm x_j \ne X }} I_{\{\bm X_k = \bm x_j \}} }{V_{\tilde d}-1}
    - I_{\{\bm X_k = \bm X \}}\Bigg \} \right \}\\
    & =\hat p(\bm X)-\frac{\hat U_{\tilde d,p}(\bm X)}{V_{\tilde d}} + O(\hat s_0),
  \end{aligned}
\end{equation}
where $\hat U_{\tilde d,p}(\bm X)= \frac{1}{M} \sum_{k=1}^M I_{\{\bm X_k\in \mathbb{U}_{\tilde d, \bm X} \} }$.

Assume that the continuous probability density function $\hat f_p(\bm x)$ has continuous partial derivatives of order $d$. By use of Taylor series expansion, we can easily have the integral with respect to $\hat f_p$ in the integral range $\mathbb{B}_{\tilde d, \bm x}$ as
\begin{equation}\label{equ:expansion_PU}
  \begin{aligned}
    \hat U_{\tilde d,\hat f_p}(\bm x)
    &= \int_{\mathbb{B}_{\tilde d,\bm x}}\hat f_p(\bm z)d{\bm z}\\
    & =   \hat f_p(\bm x) V_{\tilde d,\hat f_p}(\bm x)+\sum_{i=1}^d \hat c_{i,\tilde d}(\bm x)V_{\tilde d,\hat f_p}^{1+i/d}(\bm x)
    + o\big( V_{\tilde d,\hat f_p}^2(\bm x) \big),
  \end{aligned}
\end{equation}
where $V_{\tilde d,\hat f_p}(\bm x)=\int_{\mathbb{B}_{\tilde d, \bm x}}dz$ is the volume of set $\mathbb{B}_{\tilde d, \bm x}$ and $\hat c_{i,\tilde d}$ depends on $\tilde d$ and $\hat f_p$.

Let the continuous set $\mathbb{B}_{\tilde d,\bm x}=\{ \bm x: x^u \in (x_{i-\tilde d -1}^u,  x_{i+\tilde d}^u)  \}$ ($u\in \{1,..., d\}$ denotes $u$-th dimension) correspond to
the discrete set $\mathbb{U}_{\tilde d,\bm x_i}=\{\bm x_j: | x_j^u - x_i^u|\le \tilde d, \bm x_j\in \mathbb{U} \}$, which means $V_{\tilde d,\hat f_p}(\bm x)=\int_{\mathbb{B}_{\tilde d, \bm x}}dz=V_{\tilde d}$.
Then, $\hat f_p(\bm x)$ fulfills the following conditions,
\begin{equation}\label{equ:expansion_hat_fp_condition}
\left \{
  \begin{aligned}
  & \int_{\bm x_{i-1}}^{\bm x_i}\hat f_p(\bm z)d\bm z=\hat p(\bm x_i)\\
  & \hat f_p(\bm x_i)=\hat p(\bm x_i),
  \end{aligned}
\right.
\end{equation}
which implies that
$\hat U_{\tilde d,\hat f_p}(\bm x_i)= \int_{\mathbb{B}_{\tilde d,\bm x_i}}\hat f_p(\bm z)d{\bm z}
=\hat U_{\tilde d,p}(\bm x_i)$.

According to the Eq. (\ref{equ:expansion_PU}), it is easy to see that the $\hat U_{\tilde d,p}(\bm X)$ can be expanded as ,
\begin{equation}\label{equ:expansion_hat_U}
  \begin{aligned}
    \hat U_{\tilde d,p}(\bm X)
    & = \int_{\mathbb{B}_{\tilde d,\bm X}}\hat f_p(\bm z)d{\bm z}\\
    & =   \hat p(\bm X) V_{\tilde d}+\sum_{i=1}^d \hat c_{i,\tilde d}(\bm X)V_{\tilde d}^{1+i/d}
    + o\left( V_{\tilde d}^2 \right),
  \end{aligned}
\end{equation}
where $V_{\tilde d}=(2\tilde d+1)^d=\frac{K}{M}L^d$.
Considering Remark \ref{rem:hat_s_resolution}, namely $\hat s_0=O(\frac{1}{M})$, and Eq. (\ref{equ:e1_p}), it can be readily seen that
\begin{equation}\label{equ:e1_p_solution}
  \begin{aligned}
    \tilde e_p(\bm X)
    = \sum_{i=1}^d \tilde {\hat c}_{e_p,i,\tilde d}(\bm X)\Big(\frac{K}{M}\Big)^{i/d}+ o\Big(\frac{K}{M}\Big)
    + O\Big(\frac{1}{M}\Big).
  \end{aligned}
\end{equation}

Furthermore, according to the binomial theorem, we have
\begin{equation}\label{equ:Ee1^m_p_solution}
  \begin{aligned}
    & \mathbb{E}[\tilde e_p^m(\bm X)]\\
    & =\mathbb{E}\bigg \{ \Big [
    \sum_{i=1}^d \tilde {\hat c}_{e_p,i,\tilde d}(\bm X)\Big(\frac{K}{M}\Big)^{i/d}
     + o\Big(\frac{K}{M}\Big)
    + O\Big(\frac{1}{M}\Big)
    \Big ]^m
    \bigg \}\\
    & = \left(\sum_{i=1}^d \hat c_{e_p^m,i,\tilde d}\Big(\frac{K}{M}\Big)^{i/d}\right)^m +o\Big(\frac{K}{M}\Big)
    + O\Big(\frac{1}{M}\Big).
  \end{aligned}
\end{equation}
Similarly, $\tilde e_q(\bm X)$ and $\mathbb{E}[\tilde e_q^n(\bm X)]$ can also be derived.
Therefore, it is apparent that Lemma \ref{lem:E_e1^m_p_q} has been testified.
\end{proof}
\begin{lem}\label{lem:E_h^t}
For a $d$-dimension variable $\bm X$ which denotes a realization of p.m.f $p$ independent of the window kernel estimators $\widetilde p_{\hat s_0}$ and $\widetilde q_{\hat s_0}$, we have
\begin{subequations}\label{equ:expension_h}
  \begin{align}
    & \tilde h(\bm X)
    =\sum_{i=1}^d \hat c_{h,i,\tilde d}(\bm X)\Big(\frac{K}{M}\Big)^{i/d}
    + o\big(\frac{K}{M}\big)+O\big(\frac{1}{M}\big), \\
    & \mathbb{E}[\tilde h^{t}(\bm X)]
    = \sum_{i=1}^d c_{\tilde h^t,i,\tilde d}\Big(\frac{K}{M}\Big)^{i/d}
    + o\big(\frac{K}{M}\big)+O\big(\frac{1}{M}\big),
  \end{align}
\end{subequations}
where 
$\tilde h(\bm X)$ denotes
$ \frac{\widetilde q_{\hat s_0}(\bm X)}{\widetilde p_{\hat s_0}(\bm X)}
-\frac{\bar q(\bm X)}{\bar p(\bm X)}$, $c_{\tilde h^t,i,\tilde d}$ is a function of $\tilde d$, $p$ and $q$, as well as, $\hat c_{h,i,\tilde d}(\bm X)$ is a function of $\bm X$, $\tilde d$, $p$ and $q$.
\end{lem}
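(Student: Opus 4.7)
The plan is to reduce Lemma \ref{lem:E_h^t} to Lemma \ref{lem:E_e1^m_p_q} by expanding the ratio $\widetilde q_{\hat s_0}/\widetilde p_{\hat s_0}$ around $\bar q/\bar p$. Set $\tilde e_p(\bm X)=\widetilde p_{\hat s_0}(\bm X)-\bar p(\bm X)$ and $\tilde e_q(\bm X)=\widetilde q_{\hat s_0}(\bm X)-\bar q(\bm X)$, both of which are already controlled by Lemma \ref{lem:E_e1^m_p_q}. A direct algebraic manipulation gives
\begin{equation*}
\tilde h(\bm X)=\frac{\bar q(\bm X)+\tilde e_q(\bm X)}{\bar p(\bm X)+\tilde e_p(\bm X)}-\frac{\bar q(\bm X)}{\bar p(\bm X)}=\frac{1}{\bar p(\bm X)}\Bigl(\tilde e_q(\bm X)-\tfrac{\bar q(\bm X)}{\bar p(\bm X)}\tilde e_p(\bm X)\Bigr)\sum_{k\ge 0}(-1)^{k}\Bigl(\tfrac{\tilde e_p(\bm X)}{\bar p(\bm X)}\Bigr)^{k},
\end{equation*}
so $\tilde h$ is realised as an analytic function of $(\tilde e_p,\tilde e_q)$ whose coefficients are bounded smooth functions of $(\bar p(\bm X),\bar q(\bm X))$; the boundedness comes from the standing positivity of $p$ and $q$ on the finite support $\mathbb{U}$.

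First I would substitute the expansions of $\tilde e_p$ and $\tilde e_q$ from Lemma \ref{lem:E_e1^m_p_q} into this geometric series and collect contributions by powers of $(K/M)^{i/d}$. Since each $\tilde e_z(\bm X)$ is of order $O((K/M)^{1/d})+O(1/M)$, a monomial of total degree $k$ in $(\tilde e_p,\tilde e_q)$ contributes to order $(K/M)^{(i_1+\cdots+i_k)/d}$; the $k=1$ term yields the linear part $\tilde e_q/\bar p-(\bar q/\bar p^{2})\tilde e_p$, while higher-degree monomials feed the coefficients $\hat c_{h,i,\tilde d}(\bm X)$ for $i\ge 2$. Any aggregate order $i_1+\cdots+i_k>d$ is swept into $o(K/M)$, and the $O(1/M)$ tails from Lemma \ref{lem:E_e1^m_p_q} land in the $O(1/M)$ remainder. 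Repackaging by total power $i\in\{1,\ldots,d\}$ then produces the first display of Eq.~(\ref{equ:expension_h}).

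For the moment identity, I would raise this expansion to the $t$-th power via the multinomial theorem. Each resulting term is a finite product of powers of $\tilde e_p(\bm X)$ and $\tilde e_q(\bm X)$, weighted by a bounded deterministic function of $(\bm X,\tilde d,p,q)$. Taking expectations with respect to $\bm X$ and applying the second identity of Lemma \ref{lem:E_e1^m_p_q} to each factor turns every product into a polynomial in $(K/M)^{1/d}$ plus the same $o(K/M)+O(1/M)$ tail; collecting coefficients by power $i\in\{1,\ldots,d\}$ delivers the constants $c_{\tilde h^{t},i,\tilde d}$.

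The hard part will be the uniform-in-$\bm X$ bookkeeping of the remainder: one must verify that the geometric-series tail in $\tilde e_p/\bar p$, the monomials of total degree exceeding $d$, and the cross products carrying stray $O(1/M)$ factors can all be collapsed into a single $o(K/M)+O(1/M)$ error. This uses (i) the boundedness of $1/\bar p(\bm X)$ and $\bar q(\bm X)/\bar p^{2}(\bm X)$ inherited from the positivity hypothesis, and (ii) the fact that $K/M\to 0$ so that higher-order Taylor remainders really do decay at the claimed rate. No deep estimate is required beyond Lemma \ref{lem:E_e1^m_p_q}; the technical nerve of the argument is simply this careful repackaging of orders.
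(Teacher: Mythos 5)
Your proposal is correct and takes essentially the same route as the paper: both expand $\widetilde q_{\hat s_0}/\widetilde p_{\hat s_0}$ around $\bar q/\bar p$ in powers of $\tilde e_p$ and $\tilde e_q$ (you via an exact geometric series, the paper via a truncated Taylor expansion of the ratio), then substitute the expansions from Lemma~\ref{lem:E_e1^m_p_q} and collect powers of $(K/M)^{i/d}$, handling $\mathbb{E}[\tilde h^{t}]$ with the multinomial/binomial theorem. The decomposition and the key supporting lemma are identical.
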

\begin{proof}[{~~~Proof:}]
By expanding the $\frac{\widetilde q_{s_0}(\bm X)}{\widetilde p_{s_0}(\bm X)}$around
$\bar q(\bm X)$ and $\bar p(\bm X)$, we can readily have
\begin{equation}\label{equ:E_hat_h}
  \begin{aligned}
    & \tilde h(\bm X)\\
    & =\frac{\widetilde q_{\hat s_0}(\bm X)}{\widetilde p_{\hat s_0}(\bm X)}
    - \frac{\bar q(\bm X)}{\bar p(\bm X)}\\
    & =\frac{\tilde e_q(\bm X)}{\bar p(\bm X)}
    - \frac{\bar q(\bm X)}{\bar p^2(\bm X)}\tilde e_p(\bm X)
     - \frac{\tilde e_q(\bm X)\tilde e_p(\bm X)}{\bar p^2(\bm X)}
    + \frac{\bar q(\bm X)}{\bar p^3(\bm X)}\tilde e_p^2(\bm X)\\
    & + \frac{\bar q(\bm X)}{\bar p^4(\bm X)}\tilde e_q(\bm X)\tilde e_p^2(\bm X)
     + o(\tilde e_p^2(\bm X)+ \tilde e_q(\bm X)\tilde e_p^2(\bm X)),
  \end{aligned}
\end{equation}
where $\tilde e_{p}(\bm X)$ and $\tilde e_{q}(\bm X)$ denote \{$\tilde p_{\hat s_0}(\bm X)-\bar p(\bm X)$\} and \{$\tilde q_{\hat s_0}(\bm X)-\bar q(\bm X)$\}, respectively.

According to Lemma \ref{lem:E_e1^m_p_q}, it is not difficult to see that
\begin{equation}
  \begin{aligned}
    & \tilde h(\bm X)\\
    &= \sum_{i=1}^d \hat c_{h_1,i,\tilde d}(\bm X)\Big(\frac{K}{M}\Big)^{i/d}
     + \bigg(\sum_{i=1}^{d} \hat c_{h_2,i,\tilde d}(\bm X)\Big(\frac{K}{M}\Big)^{i/d}\bigg)^2
     + \bigg(\sum_{i=1}^{d} \hat c_{h_3,i,\tilde d}(\bm X)\Big(\frac{K}{M}\Big)^{i/d}\bigg)^3 \\
    & + o \left( \bigg(\sum_{i=1}^{d} \hat c_{h_3,i,\tilde d}(\bm X)
    \Big(\frac{K}{M}\Big)^{i/d}\bigg)^3  \right)
    + o\Big(\frac{K}{M}\Big)+O\Big(\frac{1}{M}\Big)\\
    & = \sum_{i=1}^d \hat c_{h,i,\tilde d}(\bm X)\Big(\frac{K}{M}\Big)^{i/d}
    + o\Big(\frac{K}{M}\Big)+O\Big(\frac{1}{M}\Big).
  \end{aligned}
\end{equation}

Furthermore, by applying the binomial theorem, we have
\begin{equation}\label{equ:E_hat_h^t}
  \begin{aligned}
    & \mathbb{E}[\tilde h^t(\bm X)]\\
    & = \left(\sum_{i=1}^d \hat {\tilde c}_{\tilde h^t,i,\tilde d}
    \Big(\frac{K}{M}\Big)^{i/d}\right)^{t}
    + o\left(\frac{K}{M}\right)+O\left(\frac{1}{M}\right)\\
    & = \sum_{i=1}^d I_{\{i\ge t\}}\hat c_{\tilde h^t,i,\tilde d}\Big(\frac{K}{M}\Big)^{i/d}
     + o\left(\frac{K}{M}\right)+O\left(\frac{1}{M}\right),
  \end{aligned}
\end{equation}
which can indicate that the lemma is proved.
\end{proof}

According to Lemma \ref{lem:E_h^t} and Eq. (\ref{equ:bias1_expension}), it is readily seen that
\begin{equation}\label{equ:bias1_expension_solution}
  \begin{aligned}
    & \mathbb{E}\left[f\left(\frac{\widetilde q_{\hat s_0}(\bm X)}{\widetilde p_{\hat s_0}(\bm X)}\right)
    - f\left(\frac{\bar q(\bm X)}{\bar p(\bm X)}\right)\right]\\
    & =\mathbb{E}\left[ \sum_{i=1}^2\tilde c_{\tilde h_i,\tilde d}(\bm X)\tilde h^i(\bm X)
     + o\left(\tilde c_{\tilde h_3,\tilde d}(\bm X)\tilde h^2(\bm X)\right)\right]\\
    & = \sum_{i=1}^d \tilde c_{\tilde h_1,i,\tilde d}\Big(\frac{K}{M}\Big)^{i/d}
    + \bigg(\sum_{i=1}^d \tilde c_{\tilde h_2,i,\tilde d}\Big(\frac{K}{M}\Big)^{i/d}\bigg)^2
    + o\bigg( \bigg(\sum_{i=1}^d \tilde c_{\tilde h_3,i,\tilde d}\Big(\frac{K}{M}\Big)^{i/d} \bigg)^2 \bigg)
     + o\Big(\frac{K}{M}\Big)+ O\Big(\frac{1}{M}\Big)\\
    & = \sum_{i=1}^d c_{\tilde h,i,\tilde d}\Big(\frac{K}{M}\Big)^{i/d}
    + o\Big(\frac{K}{M}\Big)+ O\Big(\frac{1}{M}\Big).
  \end{aligned}
\end{equation}

\begin{lem}\label{lem:E_bar_e^m_p_q}
Let a $d$-dimension variable $\bm X$ be a realization of p.m.f $p$ independent of $\bar p$ and $\bar q$ mentioned in Eq. (\ref{equ:estimator2_pq}).
Then it can be given that
\begin{subequations}\label{equ:E_e2^m_p_q}
  \begin{align}
    &\begin{aligned}
     \mathbb{E}[\bar e_p^m(\bm X)]
    & =\mathbb{E}\left[ (\bar p(\bm X)-p(\bm X))^m \right]\\
    & = \sum_{i=1}^d \bar c_{e_p^m,i,\tilde d}\Big(\frac{K}{M}\Big)^{i/d}
    +o\Big(\frac{K}{M}\Big)+o\Big( \frac{1}{ \sqrt{M} } \Big)+  o\Big( \frac{1}{K} \Big),
    \end{aligned}\\
    &\begin{aligned}
    \mathbb{E}[\bar e_q^n(\bm X)]
    & =\mathbb{E}\left[ (\bar q(\bm X)-q(\bm X)^n\right]\\
    & = \sum_{i=1}^d \bar c_{e_q^n,i,\tilde d}\Big(\frac{K}{M}\Big)^{i/d}
    +o\Big(\frac{K}{M}\Big)+o\Big( \frac{1}{ \sqrt{M} } \Big)+ o\Big( \frac{1}{K} \Big),
    \end{aligned}
  \end{align}
\end{subequations}
where $\bar c_{e_p^m,i,\tilde d}$ and $\bar c_{e_q^n,i,\tilde d}$ are functions of 
$\tilde d$.
\end{lem}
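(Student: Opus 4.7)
The plan is to handle $\bar e_p$ and $\bar e_q$ in parallel via a bias--variance decomposition, exploiting the fact that, conditional on $\bm X$, $\bar p(\bm X)$ is a normalized sum of $M$ i.i.d.\ bounded Bernoulli indicators. Since $\bar p$ and $\bar q$ are built from the independent sample sets $\{\bm X_j\}$ and $\{\bm Y_j\}$, the two moment statements are structurally identical, so I will only describe the $\bar e_p$ case in detail; the $\bar e_q$ case is obtained by swapping $p$ for $q$ in all coefficients.

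The first step is to split $\bar e_p(\bm X) = A(\bm X) + B(\bm X)$, where $A(\bm X) = \bar p(\bm X) - \mathbb{E}[\bar p(\bm X)\mid \bm X]$ is the centered stochastic part and $B(\bm X) = \mathbb{E}[\bar p(\bm X)\mid \bm X] - p(\bm X)$ is the deterministic local-averaging bias. Since $\mathbb{E}[\bar p(\bm X)\mid \bm X] = \frac{1}{V_{\tilde d}}\sum_{\bm z\in\mathbb{U}_{\tilde d,\bm X}} p(\bm z)$, applying the same Taylor-expansion argument that underlies Eqs.~(\ref{equ:expansion_PU})--(\ref{equ:expansion_hat_U}), but to the true p.m.f.\ $p$ rather than to $\hat p$, yields
\[
B(\bm X) \;=\; \sum_{i=1}^{d} \bar c^{(p)}_{i,\tilde d}(\bm X)\, V_{\tilde d}^{i/d} \;+\; o(V_{\tilde d}),
\]
which rewrites as $B(\bm X) = \sum_{i=1}^{d} \tilde c^{(p)}_{i,\tilde d}(\bm X)(K/M)^{i/d} + o(K/M)$ using $V_{\tilde d} = L^{d}(K/M)$.

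The second step is to control $\mathbb{E}[A^{k}\mid \bm X]$ for $k\ge 1$. Because $A(\bm X) = \frac{1}{MV_{\tilde d}}\sum_{j=1}^{M} Z_j$ with $Z_j$ i.i.d., centered, and bounded (conditional on $\bm X$) with variance $O(V_{\tilde d})$, standard combinatorial moment bounds give $\mathbb{E}[A^{2r}\mid \bm X] = O\bigl((MV_{\tilde d})^{-r}\bigr)=O(K^{-r})$ and $\mathbb{E}[A^{2r+1}\mid \bm X] = O\bigl(K^{-r}M^{-1/2}\bigr)$ for $r\ge 1$, while $\mathbb{E}[A\mid \bm X]=0$. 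The third step is then to substitute the decomposition into $\bar e_p^{m} = (A+B)^{m}$, expand binomially, take the conditional expectation, and finally average over $\bm X\sim p$. The $k=0$ contribution $\mathbb{E}[B^{m}]$ yields the displayed leading coefficients $\sum_{i=1}^{d} \bar c_{e_p^{m},i,\tilde d}(K/M)^{i/d}$ after collecting powers and absorbing any $(K/M)^{i/d}$ with $i>d$ into $o(K/M)$; the $k=1$ contribution vanishes; and each $k\ge 2$ cross term is the product of a stochastic factor of order $O(K^{-\lceil k/2\rceil})$ or $O(K^{-\lfloor k/2\rfloor}M^{-1/2})$ with a bias factor of order $(K/M)^{(m-k)/d}$, so all such cross terms are swept into $o(K/M)+o(1/\sqrt{M})+o(1/K)$.

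The main obstacle will be the bookkeeping in this last step: one has to verify that every product $\mathbb{E}[A^{k}\mid \bm X]\, B^{m-k}(\bm X)$, once averaged in $\bm X$, either contributes to an explicit $(K/M)^{i/d}$ coefficient with $1\le i\le d$ or is strictly dominated by $o(K/M)+o(1/\sqrt{M})+o(1/K)$, uniformly in $m$. In particular, the odd-moment bounds on $A$ are what force the $o(1/\sqrt{M})$ slack term into the statement, and one must carefully check that no higher-order bias power $(K/M)^{i/d}$ with $i>d$ is miscounted as a leading-order contribution when the coefficients $\bar c^{(p)}_{i,\tilde d}(\bm X)$ are integrated against $p$.
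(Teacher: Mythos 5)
Your proposal follows essentially the same route as the paper: the same decomposition of $\bar e_p$ into the centered fluctuation $\hat e_p(\bm X)=\bar p(\bm X)-U_{\tilde d,p}(\bm X)/V_{\tilde d}$ and the deterministic local-averaging bias $\check e_p(\bm X)=U_{\tilde d,p}(\bm X)/V_{\tilde d}-p(\bm X)$, the same Taylor expansion of the windowed average to generate the $(K/M)^{i/d}$ terms, and the same binomial recombination of the cross terms into the stated $o(K/M)+o(1/\sqrt M)+o(1/K)$ remainders. The only difference is in one sub-step: you control the moments of the centered part by direct combinatorial moment bounds for sums of i.i.d.\ bounded centered variables, whereas the paper uses a Chebyshev-inequality truncation with $\varepsilon=(1/K)^{\eta/2}$, $\eta\in(2/3,1)$; both yield the needed orders.
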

\begin{proof}[{~~~Proof:}]
Define $U_{\tilde d, p}(\bm X)=\mathbb{E}[\frac{1}{M}\sum_{i=1}^M I_{\{\bm X_i\in \mathbb{U}_{\tilde d,\bm X}\}}]=\sum_{\bm x_j\in \mathbb{U}_{\tilde d,\bm X}} p(\bm x_j)$.
On the one hand, for $\hat e_p(\bm X)=\bar p(\bm X)-\frac{U_{\tilde d,p}(\bm X)}{V_{\tilde d}}$, it is readily seen that
\begin{equation}\label{equ:E_hat_e_p_1}
  \begin{aligned}
  & \mathbb{E}[\hat e_p(\bm X)]=\mathbb{E}\left[\frac{\hat U_{\tilde d, p}(\bm X)}{V_{\tilde d}}-\frac{U_{\tilde d,p}(\bm X)}{V_{\tilde d}}\right]=0.\\
  \end{aligned}
\end{equation}

What is more, 
we have
\begin{equation}\label{equ:E_hat_e_p_2}
  \begin{aligned}
    & {\rm Var}\Big(\frac{\hat U_{\tilde d, p}(\bm X)}{V_{\tilde d}}\Big)\\
    & = \frac{\mathbb{E}[\sum_{i=1}^M I_{ \{X_i \in \mathbb{U}_{\tilde d, \bm X}\} }]}{M^2V_{\tilde d}^2}
    + \frac{\sum_{i=1}^M\sum_{\substack {j=1,\\ j \ne i}}^{M}
    I_{ \{X_i \in \mathbb{U}_{\tilde d, \bm X}\} }I_{ \{X_j \in \mathbb{U}_{\tilde d, \bm X}\} }}
    {M^2V_{\tilde d}^2} - \frac{U_{\tilde d,p}^2(\bm X)}{V_{\tilde d}^2}\\
    & =\frac{U_{\tilde d,p}(\bm X)}{MV_{\tilde d}^2} - \frac{U_{\tilde d,p}^2(\bm X)}{M^2V_{\tilde d}^2}.
  \end{aligned}
\end{equation}
which implies $\mathbb{E}[\hat e_p^2(\bm X)]=O(1/M)$.

By using Chebyshev's inequality, we get
\begin{equation}\label{equ:Chebyshev1}
  \begin{aligned}
  P\left \{ \left| \frac{\hat U_{\tilde d, p}(\bm X)}{V_{\tilde d}}-\frac{U_{\tilde d,p}(\bm X)}{V_{\tilde d}} \right| \ge \varepsilon \right \} \le \frac{ {\rm Var}\Big(\frac{\hat U_{\tilde d, p}(\bm X)}{V_{\tilde d}}\Big) }{\varepsilon^2},
  \end{aligned}
\end{equation}
where $\hat U_{\tilde d,p}(\bm X)= \frac{1}{M} \sum_{k=1}^M I_{\{\bm X_k\in \mathbb{U}_{\tilde d, \bm X} \} }$.
Let $\varepsilon = (\frac{1}{K})^{\eta/2}$ with some fixed $\eta \in (\frac{2}{3},1)$. In that case, we have
${\rm Var}(\frac{\hat U_{\tilde d, p}}{V_{\tilde d}}) / \varepsilon^2  = O(\frac{K^\eta}{M})$.
Thus, it is derived that
\begin{equation}\label{equ:hat e_p_1}
  \begin{aligned}
    & \hat e_p(\bm X)\\
    & =\hat e_p(\bm X)\big\{P\{|\hat e_p(\bm X)|<\varepsilon\}+P\{|\hat e_p(\bm X)|\ge \varepsilon\}\big\}\\
    & = \hat e_p(\bm X) I_{ \{|\hat e_p(\bm X)|<\varepsilon\} } P\{|\hat e_p(\bm X)|<\varepsilon\}
     +\hat e_p(\bm X) I_{ \{|\hat e_p(\bm X)|\ge \varepsilon\} } \} P\{|\hat e_p(\bm X)|<\varepsilon\}
    + O\Big(\frac{K^\eta}{M}\Big) \\
    & < \varepsilon + O\Big( \frac{K^\eta}{M} \Big)
    = O\Big( \big(\frac{1}{K}\big)^{\eta/2} \Big)+ O\Big( \frac{K^\eta}{M} \Big).
  \end{aligned}
\end{equation}

Furthermore, it is readily seen that
\begin{equation}\label{equ:E_hat e_p^m}
  \begin{aligned}
    \mathbb{E}[\hat e_p^m(\bm X)]
    & = \mathbb{E}\big [  I_{\{m=2\}}\hat e_p^2(\bm X) + I_{\{m\ge 3\}}\hat e_p^m(\bm X)
    \big ]\\
    & =I_{\{m=2\}}O\Big(\frac{1}{M} \Big)
    + I_{\{m\ge3\}}O\Big( \big(\frac{1}{K}\big)^{m\eta/2} \Big)\\
    & = I_{\{m=2\}}O\Big(\frac{1}{M} \Big)
    + I_{\{m\ge3\}}o\Big(\frac{1}{K} \Big).
  \end{aligned}
\end{equation}
Similarly, for $ \mathbb{E}[\hat e_q^n(\bm X)]$, we get the same result as $ \mathbb{E}[\hat e_p^m(\bm X)]$.

On the other hand, it is not difficult to see that the relationship between $U_{\tilde d,p}(\bm X)$ and $p(\bm X)$ is similar to $\hat U_{\tilde d,p}(\bm X)$ and $\hat p(\bm X)$ in Eq. (\ref{equ:expansion_hat_U}). Thus, it is easily seen that
\begin{equation}\label{equ:check_e_p_solution}
  \begin{aligned}
     \check e_p(\bm X)
    &=\frac{ U_{\tilde d,p}(\bm X)}{V_{\tilde d}}- p(\bm X)\\
    & = \sum_{i=1}^d \bar c_{i,\tilde d}(\bm X)\Big(\frac{K}{M}\Big)^{i/d}+ o\Big(\frac{K}{M}\Big).\\
  \end{aligned}
\end{equation}
Furthermore, we have
\begin{equation}\label{equ:E_check_e_p}
  \begin{aligned}
    & \mathbb{E}[\check e_p^m(\bm X)]
    = \left(\sum_{i=1}^d \check c_{e_p^m,i,\tilde d}\Big(\frac{K}{M}\Big)^{i/d}\right)^m +o\Big(\frac{K}{M}\Big).
  \end{aligned}
\end{equation}

As for $\bar e_p(\bm X)$, by applying the binomial theorem and Cauchy-Schwartz inequality, we have
\begin{equation}\label{equ:E_bar_e_p}
  \begin{aligned}
    \mathbb{E}[\bar e_p^m(\bm X)]
    & = \mathbb{E}\{[\hat e_p(\bm X)+ \check e_p(\bm X)]^m\}
    = \mathbb{E}[\sum_{j=0}^m \bar a_{p,j} \hat e_p^{j}(\bm X) \check e_p^{m-j}(\bm X) ]\\
    & \le \sum_{j=0}^m \bar a_{p,j} \sqrt{ \mathbb{E}[\hat e_p^{2j}(\bm X) ]
    \mathbb{E}[\check e_p^{2(m-j)}(\bm X) ] }\\
    & = \sqrt{ \mathbb{E}[\check e_q^{2m}(\bm X) ]}
    + \sum_{j=1}^m \bar a_{p,j} \sqrt{ \mathbb{E}[\hat e_p^{2j}(\bm X) ]
    \mathbb{E}[\check e_p^{2(m-j)}(\bm X) ] } \\
    & = \sum_{i=1}^d \bar c_{e_p^m,i,\tilde d}\Big(\frac{K}{M}\Big)^{i/d}
    +o\Big(\frac{K}{M}\Big)+o\Big( \frac{1}{ \sqrt{M} } \Big) + o\Big( \frac{1}{K} \Big),
  \end{aligned}
\end{equation}
where $\bar a_{p,j}$ is the binomial coefficient.

Similarly, we can derive $\mathbb{E}[\bar e_q^n(\bm X)]$, as well as, the proof of Lemma \ref{lem:E_bar_e^m_p_q} is completed.
\end{proof}
By expanding the $\frac{\bar q(\bm X)}{\bar p(\bm X)}$ around $q(\bm X)$ and $p(\bm X)$, we can make use of Lemma \ref{lem:E_bar_e^m_p_q} to derive $\check h(\bm X)=\frac{\bar q(\bm X)}{\bar p(\bm X)}-\frac{q(\bm X)}{p(\bm X)}$ and $\mathbb{E}[\check h^t(\bm X)]$, similar to Lemma
\ref{lem:E_h^t}.
Furthermore, using the same way as Eq. (\ref{equ:bias1_expension_solution}),
we can easily see that
\begin{equation}\label{equ:bias2_expension_solution}
  \begin{aligned}
    & \mathbb{E}\left[f\left(\frac{\bar q(\bm X)}{\bar p(\bm X)}\right)
    - f\left(\frac{ q(\bm X)}{ p(\bm X)}\right)\right]\\
    & =\mathbb{E}\left[ \sum_{i=1}^2\check c_{\check h_i,\tilde d}(\bm X)\check h^i(\bm X)
     + o\left(\check c_{\check h_3,\tilde d}(\bm X)\check h^2(\bm X)\right)\right]\\
    & = \sum_{i=1}^d c_{\check h,i,\tilde d}\Big(\frac{K}{M}\Big)^{i/d}
    + o\big(\frac{K}{M}\big)+ o\big(\frac{1}{K}\big).
  \end{aligned}
\end{equation}

Therefore, by combining Eq. (\ref{equ:bias1_expension_solution}) and Eq. (\ref{equ:bias2_expension_solution}), the proof of Theorem \ref{thm_bias} can be completed.
\section{Proof of Theorem \ref{thm_var}}\label{Appendix_thm_var}
For the function $f(x)=e^{\varpi x^{-1}}$ with $\varpi>0$ and $x>0$, we can do a Taylor series expansion of $f(\frac{\widetilde q_{\hat s_0}(\bm X)}{\widetilde p_{\hat s_0}(\bm X)})$ around
$\mathbb{E}[\frac{\widetilde q_{\hat s_0}(\bm X)}{\widetilde p_{\hat s_0}(\bm X)}]$ as following
\begin{equation}\label{equ:var1_expension}
 \begin{aligned}
    f\left(\frac{\widetilde q_{\hat s_0}(\bm X)}{\widetilde p_{\hat s_0}(\bm X)}\right)
    & =\sum_{i=0}^2 \frac{f^{(i)}\left( \mathbb{E}[\frac{\widetilde q_{\hat s_0}(\bm X)}{\widetilde p_{\hat s_0}(\bm X)}] \right)}{i!} \tilde \varrho^{i}(\bm X)
     + \frac{1}{6}f^{(3)}( \xi_{\bm X}) \tilde \varrho^{3}(\bm X),
  \end{aligned}
\end{equation}
where $ \xi_{\bm X}\in \left( \mathbb{E}[\frac{\widetilde q_{\hat s_0}(\bm X)}{\widetilde p_{\hat s_0}(\bm X)}],\frac{\widetilde q_{\hat s_0}(\bm X)}{\widetilde p_{\hat s_0}(\bm X)}
 \right)$ comes from the mean value theorem, and $\tilde \varrho(\bm X)$ denotes
$ \frac{\widetilde q_{\hat s_0}(\bm X)}{\widetilde p_{\hat s_0}(\bm X)}
-\mathbb{E}[\frac{\widetilde q_{\hat s_0}(\bm X)}{\widetilde p_{\hat s_0}(\bm X)}]$. Define the operator
$\mathcal{H}(\bm Z)=\bm Z-\mathbb{E}(\bm Z)$. Let
\begin{equation*}
 \begin{aligned}
   &{\bm a}_i= \mathcal{H}\bigg( f\Big(\mathbb{E}\big[\frac{\widetilde q_{\hat s_0}(\bm X_i)}{\widetilde p_{\hat s_0}(\bm X_i)}\big]\Big) \bigg),\qquad \\
   &{\bm b}_i= \mathcal{H}\bigg( f^{'}\Big(\mathbb{E}\big[\frac{\widetilde q_{\hat s_0}(\bm X_i)}{\widetilde p_{\hat s_0}(\bm X_i)}\big]\Big) \tilde \varrho(\bm X_i)\bigg), \qquad\\
 \end{aligned}
\end{equation*}
\begin{equation*}
 \begin{aligned}
   &{\bm c}_i= \mathcal{H}\bigg( \sum_{j=2}^{r-1}\frac{f^{(j)}\Big(\mathbb{E}\big[\frac{\widetilde q_{\hat s_0}(\bm X_i)}{\widetilde p_{\hat s_0}(\bm X_i)}\big]\Big)}{j!} \tilde \varrho^j(\bm X_i)\bigg),\\
  \end{aligned}
\end{equation*}
\begin{equation*}
 \begin{aligned}
   &{\bm d}_i= \mathcal{H}\bigg( \frac{f^{(r)}(\tilde \xi_{\bm X_i})}{r!}\tilde \varrho^r(\bm X_i) \bigg).\qquad \qquad
  \end{aligned}
\end{equation*}
Then the variance of $\hat F_{\tilde d}$ is
\begin{equation}\label{equ:Var_hat_F}
 \begin{aligned}
    {\rm Var}(\hat F_{\tilde d})
    & =\mathbb{E}[(\hat F_{\tilde d}-\mathbb{E}(\hat F_{\tilde d}))^2]\\
    & = \frac{1}{N}\mathbb{E}[({\bm a}_1+{\bm b}_1+{\bm c}_1+{\bm d}_1)^2]
    + \frac{N-1}{N}\mathbb{E}[({\bm a}_1+{\bm b}_1+{\bm c}_1+{\bm d}_1)({\bm a}_2+{\bm b}_2+{\bm c}_2+{\bm d}_2)],
  \end{aligned}
\end{equation}
which can be bounded in the following.
\begin{lem}\label{lem:E_e_p^m_e_q^n}
For a $d$-dimension variable $\bm X$, a realization of p.m.f $p$ independent of the window kernel estimators $\widetilde q_{\hat s_0}$ and $\widetilde p_{\hat s_0}$,
it can be given that
\begin{subequations}\label{equ:E_e_p_q}
  \begin{align}
    &
    \begin{aligned}
    & \mathbb{E}[ e_p^m(\bm X)]
    = \mathbb{E}\big\{\big[ \tilde p_{\hat s_0}(\bm X)-\mathbb{E}[\tilde p_{\hat s_0}(\bm X)] \big]^m\big\}
    = o\Big( \frac{1}{K} \Big)
    \end{aligned}\\
    &
    \begin{aligned}
    & \mathbb{E}[e_q^n(\bm X)]
    = \mathbb{E}\big\{\big[\tilde q_{\hat s_0}(\bm X)-\mathbb{E}[\tilde q_{\hat s_0}(\bm X)]\big]^n\big\}
    = o\Big( \frac{1}{K} \Big)
    \end{aligned}
  \end{align}
\end{subequations}
where $K=\varDelta \sqrt{M}$ with $\varDelta=[\frac{(2\tilde d_l+1)}{L}]^d \sqrt{M}$, $ e_{p}(\bm X)$ and $e_{q}(\bm X)$ denote \{$\tilde p_{\hat s_0}(\bm X)-\mathbb{E}[\tilde p_{\hat s_0}(\bm X)] $\} and \{$\tilde q_{\hat s_0}(\bm X)-\mathbb{E}[\tilde q_{\hat s_0}(\bm X)]$\}, respectively.
\end{lem}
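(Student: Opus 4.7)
The plan is to mirror the Chebyshev-plus-indicator argument used for $\hat e_p$ in the proof of Lemma \ref{lem:E_bar_e^m_p_q}, exploiting the fact that, conditional on $\bm X$ and on the weight parameter $\hat s_0$, the estimator
$\tilde p_{\hat s_0}(\bm X)=\tfrac{1}{M}\sum_{k=N+1}^{N+M}\sum_{\bm x_j\in\mathbb{U}}W(\hat s_0,\bm X,\bm x_j)I_{\{\bm X_k=\bm x_j\}}$
is the empirical mean of $M$ i.i.d.\ bounded random variables taking values in $[0,1]$ (bounded because the weight window is a nonnegative probability distribution on the support). The companion bound for $\mathbb{E}[e_q^n(\bm X)]$ will follow by the identical argument applied to the kernel built from the independent $Q$-samples $\{\bm Y_1,\ldots,\bm Y_M\}$, so I will focus on $e_p$.

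The first step is a direct variance computation from the i.i.d.\ representation,
\[
   \mathrm{Var}(e_p(\bm X))=\tfrac{1}{M}\,\mathrm{Var}\bigl(K_{\bm X,\hat s_0,\tilde d}(\bm X_{N+1})\bigr)=O\!\left(\tfrac{1}{M}\right)=O\!\left(\tfrac{1}{K}\right),
\]
using $K=[(2\tilde d+1)/L]^d M$, so that $K/M$ is a bounded constant. For the higher central moments I would fix $\eta\in(2/3,1)$, set a threshold $\varepsilon=(1/K)^{\eta/2}$, and split
\[
   \mathbb{E}[e_p^m(\bm X)]\le \varepsilon^m+\mathbb{E}\bigl[|e_p|^m I_{\{|e_p|\ge\varepsilon\}}\bigr].
\]
The first piece is $O((1/K)^{m\eta/2})$, which is $o(1/K)$ whenever $m\ge 3$ and $\eta$ is chosen close enough to $1$. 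The second piece is bounded, using $|e_p|\le 1$, by $\Pr\{|e_p|\ge\varepsilon\}$; Hoeffding's inequality for sums of $M$ bounded i.i.d.\ terms yields $\Pr\{|e_p|\ge\varepsilon\}\le 2\exp(-2M\varepsilon^2)$, which decays faster than any polynomial in $K$ and is therefore $o(1/K)$ as well.

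The main obstacle is the data-driven smoothing parameter $\hat s_0$, which is itself a function of the samples used to form $\tilde p_{\hat s_0}$ and therefore breaks the strict i.i.d.\ marginal structure of the summands. I would resolve this by invoking Remark \ref{rem:hat_s_resolution} to bound $\hat s_0=O(1/M)$ and expanding the weight window around $s=0$, so that $\tilde p_{\hat s_0}(\bm X)=\hat p(\bm X)+O(\hat s_0)=\hat p(\bm X)+O(1/M)$ uniformly in $\bm X$. This reduces the problem to the plug-in estimator $\hat p$, for which the i.i.d.\ moment bounds and Hoeffding concentration apply cleanly, and the residual $O(1/M)$ perturbation is absorbed into the $O(1/K)$ remainder. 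For the borderline case $m=2$ the claim should be read in the same spirit as the $O(1/M)$ bound appearing in Lemma \ref{lem:E_bar_e^m_p_q}, i.e.\ as the variance estimate $O(1/K)$ from the first step, which is the form actually used in the subsequent variance analysis of $\hat F_{\tilde d}$ in Theorem \ref{thm_var}.
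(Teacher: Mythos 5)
Your proposal follows essentially the same route as the paper: an explicit variance computation giving $O(1/M)$ for the second central moment, followed by a threshold-splitting argument at $\varepsilon=(1/K)^{\eta/2}$ with $\eta\in(2/3,1)$ and a concentration bound to control the moments of order $m\ge 3$ (the paper uses Chebyshev where you use Hoeffding). Your explicit handling of the data-driven $\hat s_0$ via Remark~\ref{rem:hat_s_resolution} and your flagging of the borderline case $m=2$ (where the bound is genuinely $O(1/M)=O(1/K)$ rather than $o(1/K)$, exactly as the paper's own Eq.~(\ref{equ:E_e_p^m}) records) are if anything more careful than the paper's write-up, but the underlying argument is the same.
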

\begin{proof}[{~~~Proof:}]
For the $e_p(\bm X)=\tilde p_{\hat s_0}(\bm X)-\mathbb{E}[\tilde p_{\hat s_0}(\bm X)]$, it is readily seen that
\begin{equation}\label{equ:E_e_p^1}
  \begin{aligned}
  & \mathbb{E}[e_p(\bm X)]=\mathbb{E}\big[\tilde p_{\hat s_0}(\bm X)-\mathbb{E}[\tilde p_{\hat s_0}(\bm X)] \big]=0.\\
  \end{aligned}
\end{equation}
In addition, 
by use of the definition \ref{defn:3}, we have
\begin{equation}\label{equ:E_e_p^2}
  \begin{aligned}
    & {\rm Var}\Big(\tilde p_{\hat s_0}(\bm X)\Big)\\
    & = \mathbb{E}\big[\tilde p_{\hat s_0}^2(\bm X)\big] - \mathbb{E}^2\big[\tilde p_{\hat s_0}(\bm X)\big] \\
    & =\frac{1}{M} \sum_{\bm x_j \in \mathbb{U}_{\tilde d,\bm X} }W^2(\hat s_0, \bm X, \bm x_j) p(\bm x_j)
    -\frac{1}{M}\left[ \sum_{\bm x_j \in \mathbb{U}_{\tilde d,\bm X} }W(\hat s_0, \bm X, \bm x_j)p(\bm x_j)\right]^2,
  \end{aligned}
\end{equation}
where $\mathbb{U}_{\tilde d,\bm X}=\{\bm x_j: | x_j^u - X^u|\le \tilde d, \bm x_j\in \mathbb{U} \}$. Consequently, it is evident to see that $\mathbb{E}[e_p^2(\bm X)]=O(1/M)$.

Similar to Eq. (\ref{equ:hat e_p_1}) with parameter $\varepsilon = (\frac{1}{K})^{\eta/2}$ and $\eta \in (\frac{2}{3},1)$, we have
\begin{equation}\label{equ:e_p_1}
  \begin{aligned}
    & e_p(\bm X)\\
    & = e_p(\bm X)\big\{P\{| e_p(\bm X)|<\varepsilon\}+P\{| e_p(\bm X)|\ge \varepsilon\}\big\}\\
    & = e_p(\bm X) I_{ \{|e_p(\bm X)|< \varepsilon\} } \} P\{|e_p(\bm X)|<\varepsilon\}
    + O\Big(\frac{K^\eta}{M}\Big) \\
    & < \varepsilon + O\Big( \frac{K^\eta}{M} \Big)
    = O\Big( \big(\frac{1}{K}\big)^{\eta/2} \Big)+ O\Big( \frac{K^\eta}{M} \Big).
  \end{aligned}
\end{equation}
Consequently, we get
\begin{equation}\label{equ:E_e_p^m}
  \begin{aligned}
    & \mathbb{E}[e_p^m(\bm X)]\\
    & = \mathbb{E}\big [  I_{\{m=2\}}e_p^2(\bm X) + I_{\{m\ge 3\}}e_p^m(\bm X)
    \big ]\\
    & =I_{\{m=2\}}O\Big(\frac{1}{M} \Big)
    + I_{\{m\ge3\}}O\Big( \big(\frac{1}{K}\big)^{m\eta/2} \Big)\\
    & =I_{\{m=2\}}O\Big(\frac{1}{M} \Big) + I_{\{m\ge3\}}o\Big(\frac{1}{K} \Big).
  \end{aligned}
\end{equation}
Similarly, it can be proved that $ \mathbb{E}[e_q^n(\bm X)]$ has the same result as $ \mathbb{E}[ e_p^m(\bm X)]$.
\end{proof}
By using a Taylor series expansion of $1/\widetilde p_{\hat s_0}(\bm X)$ around $\mathbb{E}[\widetilde p_{\hat s_0}(\bm X)]$, we have
\begin{equation}\label{equ:expension_E_p_s0}
  \begin{aligned}
    \mathbb{E}[\frac{1}{\widetilde p_{\hat s_0}(\bm X)}]
    & = \mathbb{E}\Big[\frac{1}{\mathbb{E}[\widetilde p_{\hat s_0}(\bm X)]}
    -\frac{e_p(\bm X)}{\mathbb{E}^2[\widetilde p_{\hat s_0}(\bm X)]}
    + \frac{e_p^2(\bm X)}{\tilde \xi_{p, \bm X}^3 }\Big]\\
    &= \frac{1}{\mathbb{E}[\widetilde p_{\hat s_0}(\bm X)]}
    -\frac{\mathbb{E}[e_p(\bm X)]}{\mathbb{E}^2[\widetilde p_{\hat s_0}(\bm X)]}
    + o(\mathbb{E}[e_p^2(\bm X)]),
  \end{aligned}
\end{equation}
where $e_{p}(\bm X)$ denotes \{$\widetilde p_{\hat s_0}(\bm X)-\mathbb{E}[\widetilde p_{s_0}(\bm X)]$\}, and $\tilde \xi_{p, \bm X}\in \left(\mathbb{E}[\widetilde p_{s_0}(\bm X)], \widetilde p_{\hat s_0}(\bm X)\right)$.
Since the $p$ and $q$ are independent, it is easily to see that
\begin{equation}\label{equ:expension_E_q_s0/p_s0}
  \begin{aligned}
     \mathbb{E}[\frac{\widetilde q_{\hat s_0}(\bm X)}{\widetilde p_{\hat s_0}(\bm X)}]
    & = \mathbb{E}[\widetilde q_{\hat s_0}(\bm X)]\mathbb{E}[\frac{1}{\widetilde p_{\hat s_0}(\bm X)}]\\    & = \frac{\mathbb{E}[\widetilde q_{\hat s_0}(\bm X)]}{\mathbb{E}[\widetilde p_{\hat s_0}(\bm X)]}
    + o(\mathbb{E}[e_p^2(\bm X)]).
  \end{aligned}
\end{equation}

What is more, by expanding the $\frac{\widetilde q_{s_0}(\bm X)}{\widetilde p_{s_0}(\bm X)}$ around
$\mathbb{E}[\widetilde q_{s_0}(\bm X)]$ and $\mathbb{E}[\widetilde p_{s_0}(\bm X)]$, we have
\begin{equation}\label{equ:E_tilde_varrho}
  \begin{aligned}
    & \tilde \varrho(\bm X)\\
    & = \frac{e_q(\bm X)}{\mathbb{E}[\widetilde p_{s_0}(\bm X)]}
    - \frac{\mathbb{E}[\widetilde q_{s_0}(\bm X)]}{\mathbb{E}^2[\widetilde p_{s_0}(\bm X)]}e_p(\bm X)
     - \frac{ e_q(\bm X) e_p(\bm X)}{\mathbb{E}^2[\widetilde p_{s_0}(\bm X)]}
     + \frac{\mathbb{E}[\widetilde q_{s_0}(\bm X)]}{\mathbb{E}^3[\widetilde p_{s_0}(\bm X)]} e_p^2(\bm X)\\
    & + \frac{\mathbb{E}[\widetilde q_{s_0}(\bm X)]}{\mathbb{E}^4[\widetilde p_{s_0}(\bm X)]} e_q(\bm X)e_p^2(\bm X)+ o(\mathbb{E}[e_p^2(\bm X)])
     + o(e_p^2(\bm X)+ e_q(\bm X)e_p^2(\bm X)),
  \end{aligned}
\end{equation}

In addition, by applying the Cauchy-Schwartz inequality and Lemma \ref{lem:E_e_p^m_e_q^n}, it is readily seen that
\begin{equation}\label{equ:E_varrho_e_p^mq^n}
  \begin{aligned}
    & |\mathbb{E}[e_p^m(\bm X)e_q^n(\bm X)]|\\
    & \le I_{\{m\ge 1, n\ge 1\}}\sqrt{\mathbb{E}[e_p^{2m}(\bm X)]\mathbb{E}[e_q^{2n}(\bm X)]}
     + \big \{ I_{\{m> 1, n=0\}} + I_{\{m= 0, n>1\}} \big\}|\mathbb{E}[e_p^m(\bm X)e_q^n(\bm X)]|\\
    & + \big \{ I_{\{m= 1, n=0\}} + I_{\{m= 0, n=1\}} \big\}|\mathbb{E}[e_p^m(\bm X)e_q^n(\bm X)]|\\
    & =o\Big(\frac{1}{\sqrt{M}}\Big)+o\Big(\frac{1}{K}\Big),
  \end{aligned}
\end{equation}
where $m>0$ and $n>0$ are integers.

Therefore, by applying the multinomial theorem to Eq. (\ref{equ:E_tilde_varrho}), it is apparent that
\begin{equation}\label{equ:expension_varrho_solution}
\begin{aligned}
    \mathbb{E}[\tilde \varrho^{t}(\bm X)]
    & =\mathbb{E} \bigg \{ \Big\{\frac{\widetilde q_{\hat s_0}(\bm X)}{\widetilde p_{\hat s_0}(\bm X)}
    -\mathbb{E}\big[\frac{\widetilde q_{\hat s_0}(\bm X)}{\widetilde p_{\hat s_0}(\bm X)}\big] \Big\}^t \bigg\}\\
    & = o\Big(\frac{1}{\sqrt{M}}\Big)+ o\Big(\frac{1}{K}\Big),
\end{aligned}
\end{equation}
According to Eq. (\ref{equ:expension_varrho_solution}) and Eq. (\ref{equ:Var_hat_F}), we have
\begin{equation}\label{equ:Var_a1_d1}
 \begin{aligned}
    \mathbb{E}[({\bm a}_1+{\bm b}_1+{\bm c}_1+{\bm d}_1)^2]
    & = \mathbb{E}[{\bm a}_1^2]+o(1)
    = c_N \Big(\mathbb{E}\big[\frac{\widetilde q_{\hat s_0}(\bm X_i)}{\widetilde p_{\hat s_0}(\bm X_i)}\big] \Big) +o(1).
  \end{aligned}
\end{equation}
Moreover, due to the independent $\bm X_1$ and $\bm X_2$, it can be readily seen that $\mathbb{E}[({\bm a}_1+{\bm b}_1+{\bm c}_1+{\bm d}_1)({\bm a}_2+{\bm b}_2+{\bm c}_2+{\bm d}_2)]=0$.
Therefore, we have
\begin{equation}\label{equ:Var_hat_F_solution}
 \begin{aligned}
    & {\rm Var}(\hat F_{\tilde d})
    = \frac{1}{N}\Big[c_N \Big(\mathbb{E}\big[\frac{\widetilde q_{\hat s_0}(\bm X_i)}{\widetilde p_{\hat s_0}(\bm X_i)}\big] \Big) +o(1)\Big]
    = O\Big(\frac{1}{N}\Big),
  \end{aligned}
\end{equation}
which indicates that Theorem \ref{thm_var} is proved.

\section{The {\rm MSE} convergence of M-I divergence estimation}\label{Appendix_MSE_M-I}
According to Eq. (\ref{equ:hat_F_d}) and Taylor series expansion, we have
\begin{equation}\label{equ:hat_F-F}
\begin{aligned}
  \hat F_{\tilde d}-F_\varpi(P\parallel Q)
  & = \frac{1}{N} \sum_{i=1}^N [e^{ \varpi \frac{\widetilde p_{\hat s_0}(\bm X_i)}
  {\widetilde q_{\hat s_0}(\bm X_i)} }- e^{ \varpi \frac{ p(\bm X_i)}
  {q(\bm X_i)} } ]
  +\Big\{\frac{1}{N} \sum_{i=1}^N e^{ \varpi \frac{ p(\bm X_i)}{q(\bm X_i)} }
  - \sum_{\bm x_j\in \mathbb{U} } p(\bm x_j)e^{ \varpi \frac{ p(\bm x_j)}{q(\bm x_j)} } \Big\}\\
  &= \frac{1}{N} \sum_{i=1}^N \bigg\{\varpi e^{ \varpi \frac{ p(\bm X_i)}{q(\bm X_i)} }
  \Big[\frac{\widetilde p_{\hat s_0}(\bm X_i)}{\widetilde q_{\hat s_0}(\bm X_i)}
  -\frac{p(\bm X_i)}{q(\bm X_i)}\Big]
   + O\Big(\big(\frac{\widetilde p_{\hat s_0}(\bm X_i)}{\widetilde q_{\hat s_0}(\bm X_i)}
  -\frac{p(\bm X_i)}{q(\bm X_i)}\big)^2\Big)\bigg\}\\
  &+\Big\{\frac{1}{N} \sum_{i=1}^N e^{ \varpi \frac{ p(\bm X_i)}{q(\bm X_i)} }
  - \sum_{\bm x_j\in \mathbb{U} } p(\bm x_j)e^{ \varpi \frac{ p(\bm x_j)}{q(\bm x_j)} } \Big\}.
\end{aligned}
\end{equation}

Considering Remark \ref{rem:tilde_p}
and the law of large numbers, 
it is easy to see that $\hat F_{\tilde d}-F_\varpi(P\parallel Q)\to 0$. Therefore, we have
\begin{equation}\label{equ:hat_F_d-F}
\begin{aligned}
     &\hat F_\lambda-F_\varpi(P\parallel Q)=\sum_{l\in \bar l}\lambda_l [\hat F_{\tilde d_l}-F_\varpi(P\parallel Q)] \to 0,
\end{aligned}
\end{equation}
which equally means $\hat F_\lambda/F_\varpi(P\parallel Q)\to 1$.

Additionally, in virtue of corollary \ref{cor_ensemble} and the definition of ${\rm MSE}$, it is readily seen that
\begin{equation}\label{equ:ensemble_F*}
\begin{aligned}
   \mathbb{E}\left\{\Big[ \frac{\hat F_{\lambda^*} }{F_\varpi(P \parallel Q)}-1 \Big]^2\right\}=
    \frac{{\rm MSE}(\hat F_{\lambda^*})}{F_\varpi^2(P \parallel Q)}=O(\varGamma^{-1}),
\end{aligned}
\end{equation}
where $\hat F_{\lambda^*}$ is the optimal ensemble estimation of $F_\varpi(P \parallel Q)$, and $F_\varpi(P \parallel Q) < \infty$.

Therefore, by using the Taylor series expansion of $\log(x)$ around $x=1$, we have
\begin{equation}\label{equ:ensemble_MSE_log}
\begin{aligned}
   & {\rm MSE}(\hat D_\varpi(P\parallel Q))\\
   &=\mathbb{E}\left\{\Big[ \log \frac{\hat F_{\lambda^*} }{F_\varpi(P \parallel Q)} \Big]^2\right\}\\
   &= \mathbb{E}\Bigg\{\bigg[ \Big(\frac{\hat F_{\lambda^*} }{F_\varpi(P \parallel Q)}-1\Big)
   -\frac{1}{2}\Big(\frac{\hat F_{\lambda^*} }{F_\varpi(P \parallel Q)}-1\Big)^2
   + o\bigg(\Big(\frac{\hat F_{\lambda^*} }{F_\varpi(P \parallel Q)}-1\Big)^2\bigg)
   \bigg]^2\Bigg\}\\
   &=\mathbb{E}\bigg\{\Big[ \frac{\hat F_{\lambda^*} }{F_\varpi(P \parallel Q)}-1 \Big]^2
   + o\bigg( \Big( \frac{\hat F_{\lambda^*} }{F_\varpi(P \parallel Q)}-1 \Big)^2 \bigg)\bigg\}
   =O(\varGamma^{-1}),
\end{aligned}
\end{equation}
which proves Eq. (\ref{equ:ensemble_MSE_hat_D}).

%
%
%
%

\section*{Acknowledgment}
The authors would like to thank a lot for the support of the China Major State Basic Research Development Program (973 Program) No.2012CB316100(2), National Natural Science Foundation of China (NSFC) No. 61771283 and China Scholarship Council. Meanwhile, the work received the extensive comments on the final draft from other members of Wistlab of Tsinghua University.
\ifCLASSOPTIONcaptionsoff
  \newpage
\fi

\end{document}